\newcommand{\F}{{\mathbb F}}
\newcommand{\fp}{{\mathbb F}_{p}}
\newcommand{\fpn}{{\mathbb F}_{p^n}}
\newcommand{\fpk}{{\mathbb F}_{p^k}}
\newcommand{\fpm}{{\mathbb F}_{p^m}}
\newcommand{\ftwo}{{\mathbb F}_2}
\newcommand{\ftwon}{{\mathbb F}_{2^n}}
\newcommand{\fthree}{{\mathbb F}_{3}}
\newcommand{\fthreen}{{\mathbb F}_{3^n}}
\newcommand{\fthreek}{{\mathbb F}_{3^k}}
\newcommand{\Tr}{{\operatorname{Tr}}}
\newtheorem{construction}{Construction}
\begin{document}

\pagestyle{plain}

\title{Results on cubic bent and weakly regular bent $p$-ary functions leading to a class of cubic ternary not weakly regular bent functions}
\titlerunning{Ternary Binomial Bent Functions}
\author{Claude Carlet$^{\dag\ddag}$\thanks{Orcid 0000-0002-6118-7927} and Alexander Kholosha$^\ddag$}

\institute{$^\dag$University of Paris 8, Saint-Denis, France\\
$^\ddag$The Selmer Center, Department of Informatics, University of Bergen\\
\email{claude.carlet@gmail.com, oleksandr.kholosha@uib.no}} 

\maketitle
\thispagestyle{plain}

\begin{abstract}
Much work has been devoted to bent functions in odd characteristic, but there still remains a gap between our knowledge of binary and 
nonbinary bent functions. In the first part of this paper, we attempt to partially bridge this gap by generalizing to any characteristic 
important properties known in characteristic two concerning the Walsh transform of derivatives of bent functions. Some of these 
properties generalize to all bent functions, while others appear to apply only to weakly regular bent functions. We deduce a method to 
obtain a bent function by adding a quadratic function to a weakly regular bent function. We also identify a particular class of bent 
functions possessing the property that every first-order derivative in a nonzero direction has a derivative (which is then a second-order 
derivative of the function) equal to a nonzero constant. We show that this property implies bentness and is shared in particular by all 
cubic bent functions. This generalizes to the odd characteristic the notion of cubic-like bent function, that was introduced and studied 
for binary functions by Irene Villa and the first author. 

In the second part of the paper, we provide (for the first time) a primary construction leading to an infinite class of cubic ternary 
vectorial bent functions that have only not weakly regular components. We show the bentness of the component functions by two approaches: 
by calculating the Walsh transform directly and by considering the second-order derivatives (and applying the results from the first part 
of the paper). We prove that they are not weakly regular by showing they do not have one of the properties that we proved in the first 
part of the paper for weakly regular bent functions. 
\end{abstract}

\section{Introduction}
 \label{sec:intro}
Boolean bent functions were first introduced by Rothaus in 1976 as an interesting combinatorial object. They are functions from $\ftwo^n$ 
to $\ftwo$ for $n$ even, whose Hamming distance to the set of all affine functions is maximum. Equivalently, their squared Walsh transform 
takes constant value. They are characterized by the property that their derivatives in nonzero directions are balanced (functions having 
this property are called perfect nonlinear). Later, the research in this area was stimulated by the significant relation to the following 
topics in mathematics and computer science: algebraic combinatorics (with difference sets), coding theory (with Kerdock codes), sequences 
(with bent sequences) and cryptography (in the design of stream ciphers and $S$-boxes for block ciphers, some of the latter explicitly 
including bent functions). Kumar, Scholtz and Welch in \cite{KuScWe85} generalized the notion of Boolean bent functions to the case of 
functions from $(\mathbb Z/q\mathbb Z)^n$ to $\mathbb Z/q\mathbb Z$, where $q$ is the power of a prime $p$ (which covers the case of 
functions from $\fp^n$ to $\fp$, called $p$-ary functions, which are the subject of the present paper). Complete classification of bent 
functions looks hopeless even in the binary case and the case of $p$-ary bent functions is still more complicated. However, many explicit 
methods are known for constructing bent functions, either from scratch (with so-called primary constructions) or from other bent functions 
(in so-called secondary constructions). Diverse properties, as well as primary and secondary constructions, are known for binary bent 
functions. Many of these properties have been extended to $p$-ary functions, but some have never been investigated in this framework, maybe 
because they are a little more delicate to handle than in the binary case. In the first part of this paper, we generalize those dealing 
with the derivatives of bent functions (and in some cases, of their duals). Some properties extend to all bent functions (possibly of some 
particular algebraic degree) and some only extend to weakly regular bent functions. In a second part of the paper, we introduce a general 
class of ternary vectorial bent functions that have only not weakly regular components. We prove their bentness (and that they are not 
weakly regular) in two ways: by the very definition of bentness and by using the results from the first part of the paper. 

\section{Preliminaries}
 \label{sec:prel}
In this section, basic definitions and properties are recalled. Then we present the details on what is known as Maiorana-McFarland (MMF) 
functions. Finally, a new construction of $p$-ary bent functions is suggested (straightly generalized from the Boolean case), that also 
covers a particular class of MMF bent functions. 

\subsection{Definitions and Basic Properties}
 \label{ssec:def}
Take a prime $p$. Functions $f:\fp^n\mapsto\fp$ are called $p$-ary functions; if $p=2$ they are referred to as Boolean functions. Functions 
$F:\fp^n\mapsto\fp^k$ with $k\geq 2$ are called $p$-ary vectorial functions. For a fixed nonzero $\alpha\in\fp^k$, the $p$-ary function 
$f_{\alpha}(x)=\alpha\cdot F(x)$ is called a \emph{component function} of $F$ (here '$\cdot$' is an inner product). Every $p$-ary function 
has a unique multivariate representation called the Algebraic Normal Form (ANF) as a polynomial in $x_1,\ldots,x_n$ over $\fp$, where the 
variables $x_i$ occur with exponents smaller than $p$. The degree of the ANF of $f$ is called the algebraic degree of $f$. Functions of 
algebraic degree $2$ are called quadratic and those of algebraic degree $3$ are called cubic. The algebraic degree of a vectorial function 
is the largest algebraic degree of its component functions. 

Given $n$ and $p$, the vector space $\fp^n$ can be endowed with the structure of the field $\fpn$ (which allows to define polynomial 
functions). The finite field $\fpk$ is a subfield of $\fpn$ if and only if $k$ divides $n$. The trace mapping from $\fpn$ to the subfield 
$\fpk$ is defined by $\Tr_k^n(x)=\sum_{i=0}^{n/k-1}x^{p^{ik}}$. In the case when $k=1$, we use the notation $\Tr_n(x)$ instead of 
$\Tr_1^n(x)$. Every function $f:\fpn\mapsto\fp$ has a unique univariate representation and (not unique) trace representation as 
\[f(x)=\sum_{i=0}^{p^n-1} a_i x^i =\Tr_n(F(x))\quad\mbox{with}\quad a_i \in\fpn, a_{pi}=a_i^p\enspace.\]
Trace  representation becomes unique when transformed to 
\[f(x)=\sum_{j\in\Gamma_n}\Tr_{o(j)}(a_j x^j)+a_{p^n-1} x^{p^n-1}\enspace,\]
where $\Gamma_n$ is the set of coset leaders of the cyclotomic classes of $p$ modulo $p^n-1$, $o(j)$ is the size of the cyclotomic class 
containing $j$, $a_j\in\F_{p^{o(j)}}$ and $a_{p^n-1}\in\fp$. We call it the {\em relative trace form} of $f$. The algebraic degree of $f$ 
is equal to the maximal $p$-weight of the exponent $i$ with $a_i\neq 0$ in the univariate form of $f$ (or in its relative trace form), 
where the $p$-weight of $i=\sum_{j=0}^n i_j\, p^j$ (where $i_j\in \{0,\dots ,p-1\}$) equals $\sum_{j=0}^n i_j$. 

Given a function $f(x)$ mapping $\fpn$ to $\fp$, the {\em Walsh transforms} of $f$ and its inverse are respectively defined by the 
following identities
\begin{equation}
 \label{eq:W_t}
W_f(y)=\sum_{x\in\fpn}\omega^{f(x)-\Tr_n(xy)}; y\in\fpn\;\mbox{and}\; \omega^{f(x)}=\frac{1}{p^n}\sum_{y\in\fpn}W_f(y)\omega^{\Tr_n(xy)}
\end{equation}
where $\omega$ is the complex primitive $p^{\rm{th}}$ root of unity equal to $e^{\frac{2\pi i}{p}}$, and where the elements of $\fp$ are 
considered as integers modulo $p$. 

As defined in \cite{KuScWe85}, $f$ is a {\em $p$-ary bent function} if all its Walsh transform coefficients satisfy $|W_f(y)|^2=p^n$. 
According to \cite[Property~8]{KuScWe85}, it can be readily seen that for a $p$-ary bent function $f$ with odd $p$ holds 
\begin{equation}
 \label{eq:bent_Wtc}
p^{-n/2}W_f(y)=\left\{\begin{array}{ll}
\pm\,\omega^{f^*(y)},&\ \mbox{if}\ n\ \mbox{is even or}\ n\ \mbox{is odd and}\ p\equiv 1\pmod 4\\
\pm\,i\,\omega^{f^*(y)},&\ \mbox{if}\ n\ \mbox{is odd and}\ p\equiv 3\pmod 4\enspace,
\end{array}\right.
\end{equation}
where $i$ is a complex primitive fourth root of unity. We call $f^*:\fpn\mapsto \fp$ the {\it dual} of $f$. Vectorial function is called 
bent if all its component functions are bent.
 
Function $f$ is bent if and only if all the derivatives $D_af(x)=f(x+a)-f(x)$ in nonzero directions $a$ are balanced, that is, take any 
value in $\fp$ equally often (i.e., $f$ is {\em perfect nonlinear}). This was observed already in \cite[Eq.~(15)]{KuScWe85} and follows 
from 
\begin{align}
 \label{eq:relwd}
\nonumber |W_f(y)|^2&=\sum_{u,v\in\fpn}\omega^{f(u)-f(v)+\Tr_n(y(v-u))}\\
\nonumber&=\sum_{w\in\fpn}\omega^{\Tr_n(yw)}\sum_{v\in\fpn}\omega^{f(v-w)-f(v)}\\
&=\sum_{w\in\fpn}\omega^{\Tr_n(yw)}\sum_{v\in\fpn}\omega^{D_{-w}f(v)}\enspace.
\end{align}
If $f$ is perfect nonlinear meaning $D_{-w}f$ is balanced for $w\neq 0$ (according to \cite[Theorem~5]{CaDi04}), then the inner sum is zero  
(and conversely, according to Lemma~\ref{le:omega} below) and $f$ is bent. The converse follows from \cite[Theorem~16]{CaDi04}. Similarly, 
vectorial function is perfect nonlinear if its derivatives in any nonzero direction are balanced. 

A $p$-ary bent function $f(x)$ is called {\em regular} (see \cite[Definition~3]{KuScWe85} and \cite[p.~576]{Ho04_1}) if, for every 
$y\in\fpn$, the normalized Walsh coefficient $p^{-n/2}W_f(y)$ is equal to a complex $p^{\rm{th}}$ root of unity, i.e., 
$p^{-n/2}W_f(y)=\omega^{f^*(y)}$, that is, in (\ref{eq:bent_Wtc}), we necessarily have $n$ even or $n$ odd with $p\equiv 1\pmod 4$ and the 
sign $\pm$ is in fact $+$. A bent function $f(x)$ is called {\em weakly regular} if there exists a complex $u$ having unit magnitude such 
that $u^{-1}p^{-n/2}W_f(y)=\omega^{f^*(y)}$ for all $y\in\fpn$, (and by (\ref{eq:bent_Wtc}) we can take $u=\pm 1$ or $u=\pm i$). The dual 
of a (weakly) regular bent function is again a (weakly) regular bent function, but the dual of not weakly regular bent functions is not 
necessarily bent and we call a bent function {\em dual-bent} if its dual is bent, so all weakly regular bent functions are dual bent. By 
the inverse Walsh transform, $f^{**}(x)=f(-x)$ for weakly regular bent functions and (as shown in \cite{OzPe20}) also holds for not weakly 
regular dual-bent functions. Therefore, the dual of a not weakly regular dual-bent function is again not weakly regular bent. Also, 
$f^{***}(x)=f^*(-x)$ and $f^{****}(x)=f(x)$. Both dual-bent and not dual-bent not weakly regular bent functions exist, see 
\cite{CeMeiPo13_1,CeMeiPo16}. 

\begin{definition}
 \label{de:par_bent}
The linear kernel of a function $f:\fpn\mapsto\fp$ is the set of those $a\in\fpm$ such that $D_af$ is a constant function (linear kernel is 
a subspace of $\fpn$). Function $f$ is called partially bent if for any $a\in\fpm$, the derivative $D_a f(x)$ is either balanced or 
constant. 
\end{definition}

An important notion for the study of $p$-ary functions is that of equivalence. Each time a bent function is found, all the functions that 
are equivalent to it are also bent (and some of them may have complex representations while the function found has in general a simple 
representation). When a new class is found, its discoverers must check whether some of its elements are inequivalent to all known bent 
functions found until then. This can be very difficult and even almost impossible when too many classes are known (which is the case in 
binary). 

\begin{definition}
Functions $f,g:\fp^n\mapsto\fp$ are \emph{extended-affine equivalent} (in brief, EA-equivalent) if there exists an affine permutation $L$ 
of $\fp^n$, an affine function $l:\fp^n\mapsto\fp$ and an element $a$ of $\fp^*$ such that $g(x)=a(f\circ L)(x)+l(x)$. A class of functions 
is \emph{complete} if it is a union of EA-equivalence classes. The \emph{completed class} is the smallest possible complete class that 
contains the original one. 
\end{definition}
The families of weakly regular bent functions, of not weakly regular bent functions, and of dual-bent functions are invariant under 
EA-equivalence (see \cite{CeMeiPo13_1}). 

It is interesting to find bent functions allowing few terms in their relative trace form. Quadratic bent functions are well understood 
since every such function is (weakly) regular and is EA-equivalent to either $x_1^2+\cdots+x_n^2$ or $dx_1^2+\cdots+x_n^2$ where $d$ is a 
nonsquare in $\fp$ (see \cite[Theorem~3]{Mei22}). From non-quadratic, the most common are regular bent functions with the trace form 
consisting of (so called, Dillon type) exponents divisible by $p^{n/2}-1$. Such bent functions are likely to exist with any number of terms 
(see, e.g, \cite{QiTaHu18,TaQiHu19,TaXuQiZh21}) and are included in the class of the partial spread functions (see \cite{LiLU14}). In 
addition to quadratic and ternary Dillon bent monomials, the only known are ternary Coulter-Matthews perfect nonlinear vectorial functions 
$x^{(3^k+1)/2}$ on $\fthreen$ with $k$ odd and $\gcd(k,n)=1$ (see \cite{CoMa97}) and also ternary functions with $n=2k$ and $k$ odd given 
by $f(x)=\Tr_n\Big(ax^{\frac{3^n-1}{4}+3^k+1}\Big)$, where $a=\xi^{\frac{3^k+1}{4}}$ and $\xi$ is a primitive element of $\fthreen$ (see 
\cite{HeKh06_1}), both classes give weakly regular bent functions. Moreover, it is conjectured that the currently known list of $p$-ary 
bent monomials, including the sporadic case in Table~\ref{ta:1}, is complete.\footnote{As in the binary case, this kind of conjecture may 
be invalidated later, as many conjectures on Boolean and vectorial functions were in the past.} 

In addition to quadratic and Dillon cases, the only known class of binomials is for $n=4k$ with 
$f(x)=\Tr_n\big(x^{p^{3k}+p^{2k}-p^k+1}+x^2\big)$ proven in \cite{HeKh10_4} to be a weakly regular bent function for any odd $p$. Also, for 
some Coulter-Matthews bent functions, the dual is trinomial (see \cite{HuZhSh17}). In Section~\ref{sec:vec}, we present an infinite class 
of trinomial ternary cubic vectorial bent functions and that have only not weakly regular components and is the first of this kind known. 
We prove bentness of component functions by two approaches. The first approach is by the definition: we calculate the Walsh transform. The 
second approach is by the second-order derivatives. 

In Section~\ref{sec:cubic}, we prove that if a function (cubic or not) is such that every of its first-order derivatives $D_a f$ with 
$a\neq 0$, admits a derivative $D_a D_bf(x)=f(x+a+b)-f(x+a)-f(x+b)+f(x)$ (we shall also denote it by $D_{a,b}f$) that is constant and 
nonzero then it is bent. In the case the function is cubic (i.e., has algebraic degree $3$) this sufficient condition is also necessary and 
since our functions are cubic, we know that if the functions are bent then this proof by the derivatives is possible. We also generalize in 
Section~\ref{sec:deriv} to every characteristic $p$ some properties of the first-order derivatives that were known only for binary 
functions. These properties are valid for weakly regular bent functions only, and they represent then tools for showing that a given bent 
function is not weakly regular as we do in Theorem~\ref{th:trinom_nwr_vec}. 

\subsection{Constructions}
 \label{ssec:constr}
Infinite classes of not weakly regular bent functions were constructed using \mbox{(semi-)}direct sum \cite{TaYaZh10,CeMeiPo16}, 
generalized Rothaus construction \cite{Mei16}, and MMF construction \cite{CeMcGMei12,CeMeiPo13_2}. The MMF original class provides a 
primary construction of regular bent functions. All the known constructions of not weakly regular bent functions are secondary since they 
build on some functions known to be (partially) bent. Primary constructions are not known. 

A well-known primary construction of Boolean functions is that of Maiorana-McFarland (see, e.g., \cite[Definition~46]{Ca96}). The original 
version of this construction (from \cite{Di74}) was for designing bent functions only and represented the input as a pair of vectors of the 
same length. Later was identified a general structure of some Boolean functions where the input is a pair of vectors with possibly 
different lengths. The truth table of these functions is the concatenation of the truth tables of affine functions. This extension allowed 
to construct bent functions but also other types of cryptographically interesting functions. We recall now the $p$-ary generalization of 
this class and we specify what is the generalization of the original class. 

\begin{definition}
 \label{de:MMF}
Let $n$ and $k$ be any positive integers such that $k<n$. We call MMF function any $n$-variable $p$-ary function of the form
\begin{equation}
 \label{eq:MMFform}
f(x,y)=x\cdot\psi(y)+h(y)\enspace, 
\end{equation}
where $\psi:\fp^{n-k}\mapsto\fp^k$, $h:\fp^{n-k}\mapsto\fp$, and '$\cdot$' is an inner product. The subclass with $n$ even and $k=n/2$ is 
called the MMF original class.  
\end{definition}

Any function $f:\fp^n\mapsto\fp$ that is affine on the cosets of a $k$-dimensional subspace $U$ is affine equivalent to a function of the 
form (\ref{eq:MMFform}). Affine means that for any $c\in\fp^n$ and $z\in U$ we have $f(c+z)=z\cdot A_c+B_c$ for some constants 
$A_c\in\fp^n$ and $B_c\in\fp$ that are defined by $c$. A standard criterion for a Boolean function to be in the completed MMF class can be 
formulated in terms of the second-order derivatives (see \cite[Proposition~54]{Ca96}). This criterion can be naturally extended to the 
$p$-ary case. Namely, $f:\fp^n\mapsto\fp$ belongs to the completed MMF class if and only if there exists a $k$-dimensional subspace $U$ of 
$\fp^n$ such that the second-order derivatives of $f$ in the direction of elements of $U$ vanish, i.e., $D_{a,b}f(x)\equiv 0$ for any 
$a,b\in U$. We say then that $f$ belongs to the {\em completed class of MMF functions with associated ($k$-dimensional) subspace $U$}. 

It is well known that an MMF function in the original class is (regular) bent if and only if $\psi$ is a permutation. On the opposite, when 
$k\neq n/2$, infinite classes of $\psi()$ and $h()$ giving bent function $f$ are not known. MMF bent functions can be (weakly) regular and 
not weakly regular in even and odd dimension (see Construction~\ref{co:1}). The only known construction giving MMF bent functions beyond 
the original class, is using partially-bent functions as follows (see \cite[Theorem~2]{CeMeiPo13_2}). 

\begin{theorem}[concatenation of partially-bent functions]
 \label{th:MMFconstr}
Take function $f(x,y):\fp^k\times\fp^{n-k}\mapsto\fp$, with $k\leq n/2$ and where $f_x(y)=f(x,y)$ are partially-bent functions with the 
common $k$-dimensional linear kernel $U$ and 
\[\mathrm{supp}(W_{f_{x_1}})\bigcap\mathrm{supp}(W_{f_{x_2}})=\emptyset\]
whenever $x_1\neq x_2$. Then $f$ is a bent function in the completed MMF class and has associated subspace $U$ (seen as a subspace in 
$\fp^n$ identified with $\{0\}\times U$). 
\end{theorem}

Vectorial Construction~\ref{co:1} comes from \cite[Theorem~3]{CeMeiPo20}. With $k=1$, it can be seen as a particular case covered both by 
Theorem~\ref{th:MMFconstr} (concatenation of partially-bent functions) and by (\ref{eq:co2_sp}) in Construction~\ref{co:2} (concatenation 
of bent functions). Construction~\ref{co:1} is interesting since it has no additional restrictions on the (partially-)bent functions used 
in concatenation. 

\begin{construction}
 \label{co:1}
Take positive integers $n,k$ and let $\{F_c\ |\ c\in\fpk\}$ be a set of vectorial bent functions mapping $\fpk^n$ to $\fpk$. The Walsh 
transform coefficients of a ($p$-ary bent) component function $f_{a,c}=\Tr_k(aF_c)$ for every $a\in\fpk^*$ are 
\[W_{f_{a,c}}(y)=\sum_{x\in\fpk^n}\omega^{\Tr_k(a F_c(x)-y\cdot x)}=p^{kn/2}u_{a,c}(y)\omega^{f^*_{a,c}(y)}\enspace,\] 
where $y\in\fpk^n$, $|u_{a,c}(y)|=1$ and '$\cdot$' is an inner product of $\fpk^n$. Then vectorial function $F:\fpk^{n+2}\mapsto\fpk$ 
defined as 
\[F(x,x_{n+1},x_{n+2})=F_{x_{n+2}}(x)+x_{n+1}x_{n+2}\]
is bent. The Walsh transform of a component function $f_a=\Tr_k(aF)$ for any $a\in\fpk^*$ is equal to 
\[W_{f_a}(y,y_{n+1},y_{n+2})=p^{(n+2)k/2}u_{a,a^{-1}y_{n+1}}(y)\omega^{f^*_{a,a^{-1}y_{n+1}}(y)-\Tr_k(a^{-1}y_{n+1}y_{n+2})}\enspace,\]
where $y\in\fpk^n$. If $f_{a,c}$ all are weakly regular (then $u_{a,c}(y)=u_{a,c}$ do not depend on $y$) but not all $u_{a,c}$ are the 
same, then $f_a$ is not weakly regular. 
\end{construction}

The following construction is the $p$-ary generalization of \cite[Theorem~15]{Ca20} that appeared originally in \cite{Ca96}. As far as we 
know, this has not been suggested before. Example of an MMF bent function obtained by concatenation of bent functions that is not covered 
by Theorem~\ref{th:MMFconstr}, is in Proposition~\ref{pr:bc_ex}. 

\begin{construction}[concatenation of bent functions]
 \label{co:2}
Take positive integers $n$ and $m$. Let $f$ be a $p$-ary function on $\F^{n+m}_p=\F^n_p\times\F^m_p$ such that, for every $y\in\F^m_p$, the 
function $f_y(x):\F^n_p\mapsto f(x,y)$ is bent with $W_{f_y}(s)=u(s)p^{n/2}\omega^{f^*_y(s)}$ and $u(s)\in\{\pm 1,\pm i\}$. Then $f$ is 
bent if and only if, for every $s\in\F^n_p$, the function $\phi_s(y)=f^*_y(s)$ is bent on $\F^m_p$. If this condition is satisfied, then 
the dual of $f$ is the function $f^*(s,t)=\phi^*_s(t)$ (taking as inner product in $\F^n_p\times\F^m_p : (x,y)\cdot(s,t)=x\cdot s+y\cdot 
t$). 

In particular, if $m=2d$ and 
\begin{equation}
 \label{eq:co2_sp}
f_{y_1,y_2}(x)=g_{y_2}(x)+y_1\cdot\pi(y_2)
\end{equation}
with $y_1,y_2\in\fp^d$, $g_c:\fp^n\mapsto\fp$ for every $c\in\fp^d$, and $\pi()$ is a permutation of $\fp^d$ then the condition on 
$f^*_y(s)$ as a function of $y$ to be bent is automatically satisfied, and the values of $u(s)$ are not relevant (here $y=(y_1,y_2)$). 
Obviously, $f(x,y_1,y_2)=f_{y_1,y_2}(x)$ is an MMF function. Function $f$ is (dual) bent if and only if $g_c$ are (dual) bent for all 
$c\in\fp^d$. Case~(\ref{eq:co2_sp}) is also covered by Theorem~\ref{th:MMFconstr}.
\end{construction}

\begin{proof}
This construction holds in general since
\begin{align*}
&W_f(s,t)=\sum_{y\in\F^m_p}W_{f_y}(s)\omega^{-t\cdot y}=u(s)p^{n/2}\sum_{y\in\F^m_p}\omega^{\phi_s(y)-t\cdot y}\\
&=u(s)p^{n/2}W_{\phi_s}(t)=u(s)v_s(t)p^{(n+m)/2}\omega^{\phi^*_s(t)}\enspace, 
\end{align*}
where $W_{\phi_s}(t)=v_s(t)p^{m/2}\omega^{\phi^*_s(t)}$. Construction~\ref{co:2} can provide (weakly) regular and not weakly regular bent 
functions $f$ depending on $u(s)v_s(t)$ that are characteristics of $f_y(x)$ and $\phi_s(y)=f^*_y(s)$. For instance, $f$ is weakly regular 
if $u(s)v_s(t)$ is constant when $(s,t)\in\F^n_p\times\F^m_p$. This will hold, in particular, if both $f_y$ with $y\in\F^m_p$ and $\phi_s$ 
with $s\in\F^n_p$ are all regular bent that gives $u(s)=v_s(t)\equiv 1$. Construction~\ref{co:2} in particular form (\ref{eq:co2_sp}) will 
cover a subset of not weakly regular bent functions that we find in Theorem~\ref{th:trinom_nwr_vec} below. 

In the special case when $m=2d$ and functions $f_y$ have the form of (\ref{eq:co2_sp}), $f_{y_1,y_2}$ is bent if and only if $g_{y_2}$ is 
bent and 
\[f^*_{y_1,y_2}(s)=g_{y_2}^*(s)+y_1\cdot\pi(y_2)=\phi_s(y_1,y_2)\enspace.\]
Then, $\phi_s$ is bent in the original MMF class which dual is well known. For any $c\in\fp^d$, let 
$W_{g_c}(s)=u_c(s)p^{n/2}\omega^{g^*_c(s)}$ for $s\in\F_p^n$ to obtain 
\begin{align*}
W_f(s,t_1,t_2)&=\sum_{y_1,y_2\in\F^d_p}W_{f_{y_1,y_2}}(s)\omega^{-t_1\cdot y_1-t_2\cdot y_2}\\
&=p^{n/2}\sum_{y_1,y_2\in\F^d_p}u_{y_2}(s)\omega^{g_{y_2}^*(s)+(\pi(y_2)-t_1)\cdot y_1-t_2\cdot y_2}\\
&=p^{(n+2d)/2}u_{\pi^{-1}(t_1)}(s)\omega^{g^*_{\pi^{-1}(t_1)}(s)-\pi^{-1}(t_1)\cdot t_2}\enspace.
\end{align*}
If $g^*_c$ are bent for every $c\in\fp^d$ (i.e., $g_c$ are dual bent) then $f^*(s,t_1,t_2)=g^*_{\pi^{-1}(t_1)}(s)-\pi^{-1}(t_1)\cdot t_2$ 
also has the form of (\ref{eq:co2_sp}) (substitute $t_2$ with $-t_2$ which is EA-equivalent transformation) and is also bent, thus, $f$ is 
dual bent. Note that $|W_{f^*}(s,t_1,t_2)|^2=p^{2d}|W_{g^*_{\pi(-t_1)}}(s)|^2$ so $f^*$ is bent if and only if $g^*_{\pi(-t_1)}$ is bent 
for any $t_1\in\fp^d$. Thus, $f$ is dual bent if and only if all $g_c$ are dual bent. 

Also, if $g_{y_2}$ are bent for any $y_2\in\fp^d$ then functions $f_{y_2}(x,y_1)=f_{y_1,y_2}(x)$ on $\fp^{n+d}$ defined as in 
(\ref{eq:co2_sp}) are partially bent with $d$-dimensional linear kernel $\{0\}\times\fp^d$ and pairwise disjoint support of Walsh spectra 
(since the linear terms in variable $y_1$ are distinct). We conclude that special form (\ref{eq:co2_sp}) of Construction~\ref{co:2} that 
gives MMF bent functions is covered by Theorem~\ref{th:MMFconstr}. If $n=0$ then (\ref{eq:co2_sp}) becomes 
\[f(y_1,y_2)=g_{y_2}+y_1\cdot\pi(y_2)=g(y_2)+y_1\cdot\pi(y_2)\]
that is the original MMF bent function. Thus, any original MMF bent function can be obtained by a special form (\ref{eq:co2_sp}) of 
Construction~\ref{co:2}.\qed
\end{proof} 

Obviously, Construction~\ref{co:1} with $k=1$ is a particular case of (\ref{eq:co2_sp}). It takes a set of bent functions $g_c$ for every 
$c\in\fp$ to produce the following bent function $f$ of $x\in\F_p^n$ and $x_{n+1},x_{n+2}\in\fp$ 
\[f(x,x_{n+1},x_{n+2})=g_{x_{n+2}}(x)+x_{n+1}x_{n+2}\]
and this has the form of (\ref{eq:co2_sp}) with $d=1$ and $\pi(y)=y$.

Weakly regular bent functions were shown to be useful for constructing certain combinatorial objects such as partial difference sets, 
strongly regular graphs and association schemes (see \cite{TaPoFe10,PoTaFeLi11,HyLe19}). This justifies why the classes of weakly regular 
bent functions are of interest. But the question whether infinite classes of not weakly regular bent function with simple representation 
exist arises. Indeed, no primary construction for not weakly regular bent functions is known and we believe that this question must be 
addressed. Moreover, as we shall see with the generalizations we give of results which were previously known only for binary bent 
functions, weakly regular functions have hard constraints about their derivatives, and considering not weakly regular bent functions may 
allow more flexibility in future applications. For a comprehensive reference on $p$-ary bent functions (in particular, monomial and 
quadratic ones) we refer the reader to \cite{HeKh06_1} and to the general survey on $p$-ary bent functions \cite{Mei22}. 

Let $f(x)=\Tr_n(a_1 x^{d_1}+a_2 x^{d_2})$ and $\xi$ be a primitive element of $\fthreen$. Table~\ref{ta:1} contains sporadic (not 
classified) examples of ternary not weakly regular bent functions over $\fthreen$ having one or two terms in the trace form, found 
computationally. Here 'db' stands for dual-bent and 'ndb' for not dual-bent. The last column indicates whether the function belongs to the 
(completed) MMF class (see \cite[Theorems~4,5]{CeMeiPo13_2}). 
\begin{table}[h]
\caption{Sporadic examples of ternary not weakly regular bent functions}
 \label{ta:1}
\renewcommand{\arraystretch}{1.2}
\centering
\begin{tabular}{c|c|c|c|c|l|l}
\hline $\ n\ $&$\ a_1\ $&$\ d_1\ $&$\ a_2\ $&$\ d_2\ $&\\
\hline $3$&$1$&$8$&$1$&$14$&db&MMF\\
$4$&$1$&$22$&$\xi^{20}$&$4$&ndb&\\
$6$&$\xi^7$&$98$&&&ndb&not cMMF\\
$6$&$\xi^7$&$14$&$\xi^{35}$&$70$&ndb&\\
$6$&$\xi$&$20$&$\xi^{41}$&$92$&db&\\
\hline
\end{tabular}
\end{table}

The following result from \cite[Corollary~3]{HeKh06_1} is frequently used in the paper.

\begin{proposition}
Take an odd prime $p$. The $p$-ary function $f(x)=\Tr_n\left(a x^2\right)$ is a (weakly) regular bent function for any nonzero $a\in\fpn$ 
and the Walsh transform coefficient of $f$ at point $b\in\fpn$ satisfy 
\begin{equation}
 \label{eq:quad_Wtc}
\eta(a)(-1)^{n-1}p^{-n/2}W_f(b)=\left\{\begin{array}{ll}
\omega^{-\Tr_n\left(\frac{b^2}{4a}\right)},&\mbox{if}\ p\equiv 1\pmod 4\\
i^n\omega^{-\Tr_n\left(\frac{b^2}{4a}\right)},&\mbox{if}\ p\equiv 3\pmod 4\enspace,
\end{array}\right.
\end{equation}
where $i$ is the complex primitive fourth root of unity and $\eta$ is the quadratic character of $\fpn$. 
\end{proposition}

\begin{proposition}
 \label{pr:bc_ex}
Take a ternary function $f:\F_{3^3}\mapsto\fthree$ defined by 
\[f(x)=\Tr_3(x^8+x^{14}+\xi x^2)\enspace,\]
where $\xi=0$ or $\xi\in\F_{3^3}^*$ is a root of the primitive polynomial $x^3-x+1$ over $\fthree$. Then $f$ is an MMF not weakly regular 
dual-bent function. In the case when $\xi=0$, $f$ is obtained by concatenation of partially-bent functions and in the case when $\xi\neq 
0$, $f$ is a concatenation of quadratic bent functions (and not a concatenation of partially-bent functions but EA-equivalent to such 
concatenation). All three functions $f$ with $\xi\neq 0$ are EA-equivalent to the dual of $f$ with $\xi=0$. 
\end{proposition}

\begin{proof}
For $\xi\neq 0$, view the field $\F_{3^3}=\fthree(\xi)$ as a $3$-dimensional vector space over $\fthree$ with the basis $\{1,\xi,\xi^2\}$. 
Use computer algebra package to obtain the following ANF of $f$  
\begin{align*}
f(x_1,x_2,x_3)&=x_2^2 x_3^2+x_2^2-x_3^2+x_1 x_3&&\mbox{if}\quad\xi=0\\
f(x_1,x_2,x_3)&=x_2^2 x_3^2+x_2^2+x_2 x_3+x_1(x_2+x_3)&&\mbox{if}\quad\xi\neq 0\enspace.
\end{align*}
Taking $\xi=0$ gives the known sporadic example of binomial not weakly regular bent function that belongs to the MMF class (see 
Table~\ref{ta:1}) and obtained by concatenation of partially-bent functions having $1$-dimensional linear kernel (as in 
Theorem~\ref{th:MMFconstr}). 

Now assume $\xi\neq 0$. Obviously, $f$ is affine on every coset of the $1$-dimensional subspace $V=\{(x_1,0,0)\}$ of $\fthree^3$ so it is 
in the MMF class. Also, $f$ restricted to the $2$-dimensional subspace $W=\{(x_1,x_2,-x_2)\}$ of $\fthree^3$ containing $V$ and its cosets 
$D$ (identified by $x_2+x_3$), gives three partially-bent functions $g_{x_2+x_3}(x_1,x_2)$ with pairwise disjoint Walsh support. 
Specifically, $g_0(x_1,x_2)=x_2^2$, $g_1(x_1,x_2)=-x_2^2-x_2+x_1$, and $g_{-1}(x_1,x_2)=-x_2^2+x_2-x_1$. Thus, $f$ is EA-equivalent to the 
concatenation construction in Theorem~\ref{th:MMFconstr} and satisfies \cite[Theorem~2]{CeMeiPo13_2}. 

The subfunctions of $f$ obtained by fixing variable $x_3$ are 
\begin{align*}
f_0(x_1,x_2)&=f(x_1,x_2,0)=x_2^2+x_1 x_2\\
f_1(x_1,x_2)&=f(x_1,x_2,1)=-x_2^2+x_1 x_2+x_1+x_2\\
f_{-1}(x_1,x_2)&=f(x_1,x_2,-1)=-x_2^2+x_1 x_2-x_1-x_2
\end{align*}
and they are all quadratic non-degenerate (thus, bent) functions that are regular with the 'plus' sign. Calculate their Walsh transform 
using (\ref{eq:quad_Wtc})
\begin{align*}
W_{f_0}(s_1,s_2)&=\sum_{x_1,x_2}\omega^{x_2^2+x_1 x_2-s_1 x_1-s_2 x_2}=i 3^{1/2}\sum_{x_1}\omega^{-s_1 x_1-(x_1-s_2)^2}\\
&=i 3^{1/2}\sum_{x_1}\omega^{-x_1^2-(s_1+s_2)x_1-s_2^2}=3\omega^{(s_1+s_2)^2-s_2^2}=3\omega^{s_1^2-s_1 s_2}\enspace,\\
W_{f_1}(s_1,s_2)&=\sum_{x_1,x_2}\omega^{-x_2^2+x_1 x_2+x_1+x_2-s_1 x_1-s_2 x_2}=-i 3^{1/2}\sum_{x_1}\omega^{x_1-s_1 x_1+(x_1-s_2+1)^2}\\
&=-i 3^{1/2}\sum_{x_1}\omega^{x_1^2-(s_1-s_2)x_1+s_2^2+s_2+1}=3\omega^{s_2^2+s_2+1-(s_1-s_2)^2}\\
&=3\omega^{-s_1^2-s_1 s_2+s_2+1}\enspace,\\
W_{f_{-1}}(s_1,s_2)&=\sum_{x_1,x_2}\omega^{-x_2^2+x_1 x_2-x_1-x_2-s_1 x_1-s_2 x_2}=-i 3^{1/2}\sum_{x_1}\omega^{-x_1-s_1 x_1+(x_1-s_2-1)^2}\\
&=-i 3^{1/2}\sum_{x_1}\omega^{x_1^2-(s_1-s_2)x_1+s_2^2-s_2+1}=3\omega^{s_2^2-s_2+1-(s_1-s_2)^2}\\
&=3\omega^{-s_1^2-s_1 s_2-s_2+1}
\end{align*}
and
\begin{align*}
f_{y}^*(s_1,s_2)&=(1-y^2)(s_1^2-s_1 s_2)+y^2(-s_1^2-s_1 s_2+s_2 y+1))\\
&=(s_1^2+1)y^2+s_2 y+s_1^2-s_1 s_2=\phi_{s_1,s_2}(y)
\end{align*}
is bent as a function of $y$ since $s_1^2+1\neq 0$. Then
\begin{align*} 
W_f(s_1,s_2,s_3)&=3W_{\phi_{s_1,s_2}}(s_3)=\eta(s_1^2+1)i 3^{3/2}\omega^{s_1^2-s_1 s_2-(s_2-s_3)^2(s_1^2+1)^{-1}}\\
&=\eta(s_1^2+1)i 3^{3/2}\omega^{s_1^2-s_1 s_2-(s_2-s_3)^2(s_1^2+1)}
\end{align*}
proving that $f$ is not weakly regular bent with the sign equal to $\eta(s_1^2+1)$. This bent function is obtained by concatenation of bent 
functions. 

Substitute $s_3=s_2-s_3$ and multiply by $-1$ to obtain the following function EA-equivalent to the dual of $f$
\[f^*(s_1,s_2,s_3)=s_1^2 s_3^2-s_1^2+s_3^2+s_1 s_2\enspace.\]
This is a bent function EA-equivalent to $f$ with $\xi=0$ which proves that $f$ is dual-bent.\qed 
\end{proof}

\section{$p$-ary Cubic-Like Bent Functions}
 \label{sec:cubic}
Boolean cubic-like bent functions were introduced in \cite{CaVi25}. These are binary functions with the property that, for any 
$a\in\ftwon^*$, there exists $b\in\ftwon$ such that the second-order derivative $D_{a,b}f(x)=f(x+a+b)+f(x+a)+f(x+b)+f(x)$ equals constant 
function $1$. The 
name of cubic-like bent function is coherent with the facts that:\\
(1) this property is always satisfied by cubic bent functions, because the derivatives are then quadratic and thanks to 
\cite[Proposition~55]{Ca99},\\
(2) any Boolean function satisfying the property is bent since every derivative in a nonzero direction is then automatically balanced. In 
the following, we generalize this to the $p$-ary case. 

We will use the following known lemma frequently in what follows, and we provide a proof, for our paper to be self-contained. 

\begin{lemma}[\cite{HeKh10_4}]
 \label{le:omega}
A $p$-ary function $f$ mapping $\fpn$ to $\fp$ is balanced if and only if $W_f(0)=0$.
\end{lemma}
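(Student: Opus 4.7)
The plan is to exploit the fact that $W_f(0)=\sum_{x\in\fpn}\omega^{f(x)}$ is simply a $\mathbb{Z}$-linear combination of $p$-th roots of unity whose coefficients count the preimages of the values of $f$. I would first introduce the notation $N_c=|\{x\in\fpn:f(x)=c\}|$ for $c\in\fp$, so that by grouping $x$ according to the value of $f(x)$ one has
\[
W_f(0)=\sum_{x\in\fpn}\omega^{f(x)}=\sum_{c=0}^{p-1}N_c\,\omega^{c}.
\]
Note also that $\sum_{c=0}^{p-1}N_c=p^n$.

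For the easy direction, I would assume $f$ is balanced, i.e. $N_c=p^{n-1}$ for every $c\in\fp$. Then $W_f(0)=p^{n-1}\sum_{c=0}^{p-1}\omega^{c}=0$, since $\omega$ is a primitive $p$-th root of unity and the sum of all $p$-th roots of unity vanishes.

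For the converse, I would assume $W_f(0)=0$ and argue via the minimal polynomial of $\omega$ over $\mathbb{Q}$. The cyclotomic polynomial $\Phi_p(X)=1+X+\cdots+X^{p-1}$ is irreducible, so $\{1,\omega,\omega^2,\ldots,\omega^{p-2}\}$ is a $\mathbb{Q}$-basis of $\mathbb{Q}(\omega)$ and the only $\mathbb{Q}$-linear relation among $1,\omega,\ldots,\omega^{p-1}$ (up to scaling) is $1+\omega+\cdots+\omega^{p-1}=0$. Substituting $\omega^{p-1}=-(1+\omega+\cdots+\omega^{p-2})$ into $\sum_{c=0}^{p-1}N_c\omega^c=0$ yields $\sum_{c=0}^{p-2}(N_c-N_{p-1})\omega^c=0$, and linear independence forces $N_0=N_1=\cdots=N_{p-1}$. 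Combined with $\sum_c N_c=p^n$, this gives $N_c=p^{n-1}$ for every $c$, i.e.\ $f$ is balanced.

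The main (and essentially only) subtlety is the irrationality argument: one must invoke the irreducibility of the cyclotomic polynomial $\Phi_p$ rather than treat the $\omega^c$ as abstract formal symbols, since the $N_c$ are nonnegative integers and the identity $W_f(0)=0$ is a relation in $\mathbb{C}$. Once this point is made explicit, the rest of the argument is purely formal counting.
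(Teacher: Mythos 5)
Your proposal is correct and follows essentially the same route as the paper: both directions hinge on writing $W_f(0)=\sum_{c=0}^{p-1}N_c\omega^c$ and invoking the irreducibility of $\Phi_p(X)=1+X+\cdots+X^{p-1}$ as the minimal polynomial of $\omega$ over $\mathbb{Q}$ to force $N_0=\cdots=N_{p-1}$. The paper phrases the converse as ``$\Phi_p$ divides $\sum_c N_cX^c$, hence the latter is a constant multiple of $\Phi_p$,'' while you phrase it via linear independence of $1,\omega,\ldots,\omega^{p-2}$; these are the same argument.
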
 

\begin{proof}
It is well known that the polynomial $p(x)=\sum_{i=0}^{p-1}x^i$ is irreducible over the rational number field and $p(\omega)=0$. Thus, 
$p(x)$ is the minimal polynomial of $\omega$ over the rational numbers. Then, obviously, if $f$ is balanced then 
$W_f(0)=\sum_{i=0}^{p-1}N\omega^i=Np(\omega)=0$, where $N=N_i=\#\{x\in\fpn : f(x)=i\}$. Conversely, if 
$W_f(0)=\sum_{i=0}^{p-1}N_i\omega^i=0$ and $N_i$ are integers then $p(x)$ divides $\sum_{i=0}^{p-1}N_ix^i$ and thus, all $N_i$ are the same 
and $f$ is balanced.\qed 
\end{proof} 

\begin{definition}
A $p$-ary function $f$ mapping $\fpn$ to $\fp$ is called {\em cubic-like bent} if, for any $a\in\fpn^*$, there exists $b\in\fpn$ such that 
the second-order derivative $D_{a,b}f(x)$ is a nonzero constant function. 
\end{definition}

\begin{theorem}
 \label{th:3like_bent}
If a $p$-ary function $f$ mapping $\fpn$ to $\fp$ is cubic-like bent, then it is bent (i.e. perfect nonlinear). 
\end{theorem}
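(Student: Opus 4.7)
The plan is to reduce to the characterization already recorded in the excerpt: a function $f:\fpn\to\fp$ is bent if and only if every first-order derivative $D_a f$ with $a\in\fpn^*$ is balanced (this follows from equation (\ref{relwd}) together with Lemma~\ref{le:omega}). So the whole task is to show that, under the cubic-like bent hypothesis, each $D_a f$ with $a\neq0$ is balanced.

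Fix $a\in\fpn^*$ and set $g(x)=D_a f(x)$. By hypothesis there exists $b\in\fpn$ such that $D_b g=D_{a,b}f$ is a nonzero constant, say $c\in\fp^*$. Observe that $b$ must be nonzero, since $D_0 g\equiv 0$ is not a nonzero constant. Thus $g(x+b)-g(x)=c$ for every $x$, and iterating gives
\[
g(x+jb)=g(x)+jc\quad\text{for all } j=0,1,\ldots,p-1.
\]
Since $c\in\fp^*$, the map $j\mapsto jc$ is a bijection of $\fp$, so on each coset $x+\langle b\rangle$ (an additive coset of the order-$p$ subgroup generated by $b$) the function $g$ takes each value of $\fp$ exactly once.

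The cosets of $\langle b\rangle$ partition $\fpn$ into $p^{n-1}$ classes of size $p$, so $g=D_a f$ takes each value of $\fp$ exactly $p^{n-1}$ times; that is, $D_a f$ is balanced. As this holds for every $a\in\fpn^*$, the derivative characterization of bentness recalled above implies that $f$ is bent.

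There is essentially no obstacle: the whole argument rests on the elementary observation that an affine progression of step $c\in\fp^*$ in $\fp$ exhausts $\fp$, combined with the fact that $D_0 g\equiv 0$ forces the witness direction $b$ to be nonzero so that the coset argument applies.
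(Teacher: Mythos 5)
Your proof is correct, and it follows the same overall strategy as the paper's: reduce bentness to the balancedness of every first-order derivative $D_af$, $a\in\fpn^*$ (via \cite[Theorem~5]{CaDi04}, i.e.\ Relation~(\ref{relwd})), and then use the witness $b$ with $D_{a,b}f\equiv c\in\fp^*$ to establish that balancedness. Where you differ is in the middle step. The paper argues with character sums: from $D_af(x+b)=D_af(x)+D_{a,b}f(x)$ it gets $\sum_{x}\omega^{D_af(x)}=\omega^{c}\sum_{x}\omega^{D_af(x)}$, forces the sum to vanish since $\omega^{c}\neq 1$, and invokes Lemma~\ref{le:omega} to translate ``exponential sum is zero'' into ``balanced''. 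You instead argue by direct counting: $b\neq 0$ (correctly noted, since $D_0 g\equiv 0$), the relation $g(x+jb)=g(x)+jc$ shows $D_af$ hits each value of $\fp$ exactly once on every coset of $\langle b\rangle$, and summing over the $p^{n-1}$ cosets gives balancedness. Your route is more elementary (it avoids roots of unity and Lemma~\ref{le:omega} entirely, needing only that $j\mapsto jc$ is a bijection of $\fp$) and yields the slightly stronger conclusion that $D_af$ is balanced on each individual coset of $\langle b\rangle$; the paper's character-sum formulation is the one that integrates naturally with the Walsh-transform machinery used throughout the rest of the paper. Both are complete and correct.
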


\begin{proof} 
Note that $D_af(x+b)=D_af(x)+D_{a,b}f(x)$ for any $a,b\in\fpn$. Take any $a\in\fpn^*$ and corresponding $b\in\fpn$ such that $D_{a,b}f(x)$ 
is a nonzero constant. Then 
\[\sum_{x\in\fpn}\omega^{D_af(x)}=\sum_{x\in\fpn}\omega^{D_af(x+b)}=\omega^{D_{a,b}f(0)}\sum_{x\in\fpn}\omega^{D_af(x)}\]
that is possible only if $\sum_{x\in\fpn}\omega^{D_af(x)}=0$ meaning that $D_af(x)$ is balanced according to Lemma~\ref{le:omega}. To 
complete the proof, use \cite[Theorems~5~and~16]{CaDi04} (recalled in Subsection~\ref{ssec:def}).\qed 
\end{proof} 

To prove that every cubic bent function is cubic-like bent we need the following result.

\begin{theorem}
 \label{th:deg2_bal}
Any quadratic $p$-ary function $f$ is balanced if and only if there exists $b\in\fpn$ such that the first-order derivative $D_bf(x)$ is a 
nonzero constant function. 
\end{theorem}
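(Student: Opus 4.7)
The backward direction is elementary and does not even require quadraticity: if $D_a f \equiv c \in \fp^*$ for some nonzero $a$, then on each additive coset of the line $\{ka : k \in \fp\}$ the function $f$ takes in order the values $f(x), f(x)+c, \ldots, f(x)+(p-1)c$, which exhaust $\fp$. Since $\fpn$ partitions into $p^{n-1}$ such cosets, $f$ is balanced.

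The nontrivial direction is $\Rightarrow$, and my plan is to control $|W_f(0)|^2$ directly and then invoke Lemma~\ref{le:omega}. The starting point is
\[
|W_f(0)|^2 \;=\; \sum_{x,y\in\fpn}\omega^{f(x)-f(y)} \;=\; \sum_{a\in\fpn}\sum_{y\in\fpn}\omega^{D_a f(y)}
\]
obtained via the substitution $a = x-y$. Because $f$ is quadratic, every $D_a f$ has algebraic degree at most one, hence is an affine $p$-ary function. An affine function is either constant or has a nonzero linear part; in the latter case it is surjective onto $\fp$ with each value attained $p^{n-1}$ times, so its character sum vanishes (a direct consequence of $\sum_{v\in\fp}\omega^v = 0$). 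Setting $S = \{a\in\fpn : D_a f \text{ is a constant function}\}$, the double sum collapses to $|W_f(0)|^2 = p^n\sum_{a\in S}\omega^{D_a f}$.

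To conclude, I would verify that $S$ is an $\fp$-subspace of $\fpn$ and that the map $\psi\colon S\to\fp$ defined by $\psi(a)=D_a f$ is $\fp$-linear. Both statements follow from the identity $D_{a+b}f(x) = D_a f(x) + D_b f(x+a)$, which for $a,b\in S$ specializes to $D_{a+b}f = \psi(a) + \psi(b)$ (a constant), together with $D_{ca}f = c\,\psi(a)$ for $c\in\fp$ by telescoping. If $\psi\equiv 0$, then $|W_f(0)|^2 = p^n|S| > 0$, so by Lemma~\ref{le:omega} $f$ is not balanced, and indeed no $a$ makes $D_a f$ a nonzero constant. If $\psi\not\equiv 0$, then $\psi$ is a surjective $\fp$-linear functional on $S$, so $\sum_{a\in S}\omega^{\psi(a)} = 0$; hence $W_f(0) = 0$, $f$ is balanced, and any $a\in S$ with $\psi(a)\neq 0$ witnesses the desired nonzero constant derivative. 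There is no real obstacle here; the whole argument is a short character-sum computation whose only essential input beyond Lemma~\ref{le:omega} is that $\deg f \le 2$ forces $D_a f$ to be affine.
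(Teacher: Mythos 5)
Your argument is correct and follows essentially the same route as the paper's proof: both compute $|W_f(0)|^2=\sum_{a,y}\omega^{D_af(y)}$, use that derivatives of a quadratic function are affine so only the constant ones survive, show that the set of such $a$ is an $\fp$-subspace on which $a\mapsto D_af(0)$ is linear, and conclude via Lemma~\ref{le:omega}. The only (harmless) difference is that you give a separate elementary coset argument for the backward direction, which the paper absorbs into its single chain of equivalences.
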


\begin{proof}
According to Lemma \ref{le:omega}, $f$ is balanced if and only if $\sum_{x\in\fpn}\omega^{f(x)}=0$. Rather than 
$\sum_{x\in\fpn}\omega^{f(x)}$, we consider $\big|\sum_{x\in\fpn}\omega^{f(x)}\big|^2$ and transform it as in (\ref{eq:relwd}) (with $y=0$)
\[\big|\sum_{x\in\fpn}\omega^{f(x)}\big|^2=\sum_{x,y\in\fpn}\omega^{f(y)-f(x)}=\sum_{x,b\in\fpn}\omega^{f(x+b)-f(x)}=\sum_{x,b\in\fpn}\omega^{D_bf(x)}\enspace.\]
For any $b\in\fpn$, the derivative $D_bf(x)$ is affine since $f$ is quadratic. Then $\sum_{x\in\fpn}\omega^{D_bf(x)}$ is zero unless 
$D_bf(x)$ is constant. Let $\mathcal E_f$ be the linear kernel of $f$, i.e., the set of those $b$ such that $D_bf(x)$ is constant. Then, 
$\mathcal E_f$ is an $\fp$-vector space since if $D_bf(x)$ and $D_cf(x)$ are constant then $D_{b+c}f(x)=D_bf(x+c)+D_cf(x)$ is constant and, 
therefore, $D_{sb+s'c}f(x)$ is constant for every $s,s'\in\fp$. Also, the function $b\mapsto D_bf(0)$ is $\fp$-linear over $\mathcal E_f$ 
since for every $b,c\in\mathcal E_f$, we have $D_{b+c}f(0)=D_bf(c)+D_{c}f(0)=D_bf(0)+D_{c}f(0)$. We deduce that 
\[\big|\sum_{x\in\fpn}\omega^{f(x)}\big|^2=p^n\sum_{b\in\mathcal E_f}\omega^{D_bf(0)}\]
is nonzero if and only if $b\in\mathcal E_f\mapsto D_bf(0)$ is constant (that is zero). And then by contraposition, $f$ is balanced if and 
only if there exists $b\in\mathcal E_f$ such that $D_bf(0)\neq 0$, that is, $D_bf(x)$ is a nonzero constant.\qed 
\end{proof}

\begin{theorem}
 \label{th:deg3_3like} 
Take a cubic $p$-ary function $f$ mapping $\fpn$ to $\fp$. Then $f$ is bent if and only if $f$ is cubic-like bent. 
\end{theorem}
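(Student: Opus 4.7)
The plan is short because both directions reduce to theorems already in hand. The direction ``cubic-like bent implies bent'' is exactly Theorem~\ref{th:3like_bent} (and in fact does not even use cubicity). So the only work is the converse: a cubic bent $f$ is cubic-like bent.

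First I would record a standard fact about $p$-ary algebraic degree: for $f$ with reduced ANF (exponents $<p$), $\deg(D_a f)\le \deg(f)-1$. The reason is that when each monomial $\prod_i x_i^{e_i}$ of top degree $d$ is expanded in $f(x+a)$, the coefficient of $\prod_i x_i^{e_i}$ equals $\prod_i\binom{e_i}{e_i}=1$, so the top-degree terms cancel in $f(x+a)-f(x)$. In particular, for cubic $f$ the derivative $D_a f$ has algebraic degree at most $2$.

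Next, fix any $a\in\fpn^*$. Since $f$ is bent, $D_a f$ is balanced by \cite[Theorem~5]{CaDi04} (recalled in the introduction). Now I apply Theorem~\ref{th:deg2_bal} to the degree-$\le 2$ function $D_a f$: balancedness yields some $b\in\fpn$ with $D_b(D_a f)$ a nonzero constant. Since $D_b(D_a f)=D_{a,b}f$, this $b$ witnesses the cubic-like bent property in the direction $a$. Because $a$ was arbitrary in $\fpn^*$, $f$ is cubic-like bent.

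The only genuinely careful point, which I expect to be the mildest of obstacles, is that Theorem~\ref{th:deg2_bal} is stated for quadratic functions while $D_a f$ might degenerate to a lower-degree function for some $a$. The proof of Theorem~\ref{th:deg2_bal} goes through verbatim for degree-at-most-$2$ functions since its key step only uses that derivatives of $D_a f$ are affine, but to avoid any ambiguity one can treat the degenerate cases directly: a constant $D_a f$ cannot be balanced, and if $D_a f$ is affine and nonconstant then $D_b(D_a f)$ is already a constant for every $b$, equal to the linear part evaluated at $b$, which is nonzero for any $b$ outside the kernel of that linear form. In either regime a suitable $b$ exists, completing the argument.
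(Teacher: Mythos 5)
Your proof is correct and follows essentially the same route as the paper: the forward direction is Theorem~\ref{th:3like_bent}, and the converse combines the balancedness criterion for bentness with Theorem~\ref{th:deg2_bal} applied to the (at most quadratic) derivatives $D_a f$. Your extra care about the degenerate cases where $D_a f$ drops below degree $2$ is a welcome refinement that the paper's proof passes over silently, but it does not change the argument.
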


\begin{proof}
Note that the first-order derivatives of any cubic function $f$ are quadratic, and $f$ is bent if and only if $D_af(x)$ is balanced for any 
$a\in\fpn^*$. Then, by Theorem~\ref{th:deg2_bal}, $f$ is bent if and only if for every nonzero $a$, there exists $b$ such that $D_{a,b}f$ 
is a nonzero constant function.\qed 
\end{proof} 


\section{Walsh Transform of Derivatives of $p$-ary Weakly Regular Bent Functions}
 \label{sec:deriv}
There exists an identity between the Walsh transforms of the (first-order) derivatives of a binary bent function and of its dual, which has 
initially been given in \cite[Lemma~2]{Ca99} and has interesting consequences, as reported in \cite[(6.3)~and~(6.4)]{Ca20}. In the 
following theorem, we generalize this result and its consequences to the $p$-ary weakly regular bent functions. As far as we know, this 
rather natural generalization has never been given in the literature. 

\begin{theorem}
 \label{th:bent_1deriv} 
For a prime $p$, take a $p$-ary weakly regular bent function $f:\fpn\mapsto\fp$. For any $c\in\fpn$, the Walsh transform of the first-order 
derivative of $f$ in the direction of $c$ takes on only real values and for any $b\in\fpn$, 
\begin{align}
 \label{eq:bent_1deriv1}
W_{D_c f}(b)&=W_{D_c f}(-b)=W_{D_{-c} f}(b)=W_{D_{-c}f}(-b)\\
 \label{eq:bent_1deriv2}
W_{D_c f}(b)&=\omega^{\Tr_n(bc)}W_{D_b f^*}(-c)\enspace.
\end{align} 
If $\Tr_n(bc)\neq 0$ then $W_{D_c f}(b)=0$ and if $\Tr_n(bc)=0$ then 
\[W_{D_c f}(b)=W_{D_b f^*}(c)\enspace.\]
\end{theorem}

\begin{proof}
To prove (\ref{eq:bent_1deriv2}), take any $b,c\in\fpn$ and calculate
\begin{align*}
W_{D_c f}(b)&=\sum_{x\in\fpn}\omega^{D_c f(x)-\Tr_n(bx)}\\
&=p^{-n}\sum_{x,y,v\in\fpn}\omega^{f(y)-f(x)-\Tr_n(bx+v(y-x-c))}\\
&=p^{-n}\sum_{v\in\fpn}W_f(v)\overline{W_f(v-b)}\omega^{\Tr_n(vc)}\\
&=\sum_{v\in\fpn}\omega^{f^*(v)-f^*(v-b)+\Tr_n(vc)}\\
&\stackrel{v=v+b}{=}\sum_{v\in\fpn}\omega^{D_b f^*(v)+\Tr_n((v+b)c)}\\
&=\omega^{\Tr_n(bc)}W_{D_b f^*}(-c)\enspace,
\end{align*}
where $\bar{\quad }$ denotes the complex conjugate and using that $f$ is weakly regular. 

Recall that the dual $f^*$ of a weakly regular bent function is a weakly regular bent function. Applying (\ref{eq:bent_1deriv2}) to $W_{D_b 
f^*}(-c)$ gives $W_{D_cf}(b)=W_{D_{-c}f^{**}}(-b)$. Using (\ref{eq:bent_1deriv2}) and the inverse Walsh transform (\ref{eq:W_t}) we obtain 
that 
\[\sum_{c\in\fpn}W_{D_cf}(b)\omega^{\Tr_n(cw)}=p^n\omega^{D_b f^*(-w-b)}\] 
for any $w\in\fpn$. 

Note that, for any $d\in \mathbb F_q$, we have
\begin{equation}
 \label{quad-like} 
\big(D_{c,d}f(x)\equiv\lambda\in\fp^*\big)\Rightarrow \big(\forall b\in\fpn\ \mbox{with}\ \Tr_n(bd)\neq\lambda\Rightarrow W_{D_cf}(b)=0\Big)\enspace.
\end{equation} 
Indeed, if the condition on the left-hand side is satisfied, then we have
\begin{align*}
W_{D_cf}(b)&=\sum_{x\in\fpn}\omega^{D_cf(x)-\Tr_n(bx)}=\sum_{x\in\fpn}\omega^{D_cf(x+d)-\Tr_n(b(x+d))}\\
&=\sum_{x\in\fpn}\omega^{D_cf(x)+\lambda-\Tr_n(b(x+d))}=\omega^{\lambda-\Tr_n(bd)}W_{D_cf}(b)\enspace.
\end{align*}

Note that, given two functions $f,g$ from $\fpn$ to $\fp$, we have
\[\sum_{b\in\fpn}W_f(b)\overline{W_g(b)}=\sum_{b,x,y\in\fpn}\omega^{f(x)-g(y)-\Tr_n(b(x-y))}=p^n\sum_{x\in\fpn}\omega^{f(x)-g(x)}\]
that is (if $f$ and $g$ are weakly regular bent) then
\begin{equation}
 \label{eq2}
\sum_{b\in\fpn}\omega^{f^*(b)-g^*(b)}=\sum_{x\in\fpn}\omega^{f(x)-g(x)}\enspace.
\end{equation}

\noindent(i) Taking weakly regular $f$ and $g(x)=f(x-c)+\Tr_n(ex)$, we have 
\begin{align*}
W_g(b)&=\sum_{x\in\fpn}\omega^{f(x-c)-\Tr_n((b-e)x)}\\
&=\sum_{x\in\fpn}\omega^{f(x)-\Tr_n((b-e)(x+c))}=\omega^{-\Tr_n((b-e)c)}W_f(b-e)
\end{align*}
and therefore, ($g$ is weakly regular bent and) we have $g^*(b)=f^*(b-e)-\Tr_n((b-e)c)$. Applying (\ref{eq2}) and replacing $b$ by $b+e$ 
and $x$ by $x+c$, we get 
\[\sum_{b\in\fpn}\omega^{D_e f^*(b)-\Tr_n(bc)}=\omega^{-\Tr_n(ec)}\sum_{x\in\fpn}\omega^{D_cf(x)-\Tr_n(ex)}\]
that is (changing $e$ into $b$ to compare with (\ref{eq:bent_1deriv2}))
\begin{equation}
 \label{rel1}
W_{D_cf}(b)=\omega^{\Tr_n(bc)}W_{D_b f^*}(c)\enspace.
\end{equation}
Note that (\ref{eq:bent_1deriv2}) and (\ref{rel1}) imply that for any $b$ and $c$
\begin{align}
W_{D_cf}(b)&=\omega^{2\Tr_n(bc)}W_{D_{-c}f}(b)\\
W_{D_b f^*}(c)&=W_{D_b f^*}(-c)\enspace.
\end{align}
Applying this to $f^*$ instead of $f$ (since $f^*$ is weakly regular), we get $W_{D_b f^{**}}(c)=W_{D_b f^{**}}\, (-c)$, and since 
$f^{**}(x)=f(-x)$, we have $D_b f^{**}(x)=f(-x-b)-f(-x)=D_{-b}f(-x)$ and 
\[W_{D_b f^{**}}(c)=\sum_{x\in\fpn}\omega^{D_{-b}f(-x)-\Tr_n(cx)}=\sum_{x\in\fpn}\omega^{D_{-b}f(x)+\Tr_n(cx)}=W_{D_{-b}f}(-c)\] 
and we deduce $W_{D_{-b}f}(-c)=W_{D_{-b}f}(c)$. 

\noindent(ii) Taking $g(x)=f(x)+\Tr_n(ex)$ and replacing $f(x)$ by $f(x-c)$, we have $g^*(b)=f^*(b-e)$ and $W_f(b)$ becomes
\[\sum_{x\in\fpn}\omega^{f(x-c)-\Tr_n(bx)}=\sum_{x\in\fpn}\omega^{f(x)-\Tr_n(b(x+c))}=\omega^{-\Tr_n(bc)}W_f(b)\enspace,\]
thus, $f^*(b)$ becomes $f^*(b)-\Tr_n(bc)$. Applying (\ref{eq2}) and replacing $b$ by $b+e$ and $x$ by $x+c$, we get 
\[\omega^{-\Tr_n(ec)}\sum_{b\in\fpn}\omega^{D_e f^*(b)-\Tr_n(bc)}=\omega^{-\Tr_n(ec)}\sum_{x\in\fpn}\omega^{-D_cf(x)-\Tr_n(ex)}\] 
that is 
\[\sum_{b\in\fpn}\omega^{D_e f^*(b)-\Tr_n(bc)}=\overline{\sum_{x\in\fpn}\omega^{D_cf(x)+\Tr_n(ex)}}\]
giving
\begin{equation}
 \label{rel2}
W_{D_e f^*}(c)=\overline{W_{D_c f}(-e)}\enspace.
\end{equation}
Combining (\ref{rel1}) and (\ref{rel2}) we obtain that
\begin{equation}
 \label{gen}
\overline{W_{D_c f}(-e)}=\omega^{-\Tr_n(ec)}W_{D_c f}(e)\enspace.
\end{equation}
In the binary case, (\ref{gen}) shows (as we knew already) that $W_{D_cf}(e)=0$ if $\Tr_n(ec)\neq 0$. In the odd characteristic case, 
applying (\ref{gen}) twice we obtain that 
\[W_{D_cf}(e)=\omega^{\Tr_n(ec)}\overline{W_{D_cf}(-e)}=\omega^{\Tr_n(ec)}\overline{\omega^{\Tr_n(-ec)}\overline{W_{D_cf}(e)}}=\omega^{2\Tr_n(ec)}W_{D_cf}(e)\]
that also implies $W_{D_cf}(e)=0$ when $\Tr_n(ec)\neq 0$. If $\Tr_n(ec)=0$ then by (\ref{gen}), $\overline{W_{D_c f}(-e)}=W_{D_c 
f}(e)=\overline{W_{D_c f}(e)}$ so $W_{D_c f}(e)$ is real.\qed 
\end{proof} 

Theorem~\ref{th:trinom_nwr_vec} provides an example of not weakly regular bent functions that do not satisfy identities 
(\ref{eq:bent_1deriv1}). Therefore, condition for $f$ to be weakly regular is substantial. 

The following corollary contains the conditions when adding a quadratic function to a $p$-ary weakly regular bent function gives a weakly 
regular bent function.
 
\begin{corollary}
 \label{no:plus_qu} 
Take odd $p$ and a $p$-ary weakly regular bent function $f$ and define $g(x)=f(x)+q(x)$ for some quadratic function $q$ that does not have 
any non-quadratic terms. If $q(x)\neq 0$ for any $x\in\fpn^*$ then $g$ is a bent function (we do not know if it is weakly regular). 
\end{corollary}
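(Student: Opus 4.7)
The plan is to prove bentness of $g$ by showing that every first-order derivative $D_c g$ in a nonzero direction $c$ is balanced, which, by Lemma~\ref{le:omega} together with the characterization of bentness via balancedness of derivatives recalled in the Preliminaries, is equivalent to $g$ being bent.

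First I would decompose $D_c g(x) = D_c f(x) + D_c q(x)$. Since $q$ is purely quadratic, its derivative $D_c q(x) = q(c) + P(x, c)$ is affine in $x$, where $P(x, y) := q(x+y) - q(x) - q(y)$ is the symmetric bilinear polar form of $q$. Expressing the $\fp$-linear map $x \mapsto P(x, c)$ as $\Tr_n(\lambda_c x)$ for a unique $\lambda_c \in \fpn$, the summation splits as
$$\sum_{x \in \fpn} \omega^{D_c g(x)} \;=\; \omega^{q(c)} \sum_{x \in \fpn} \omega^{D_c f(x) + \Tr_n(\lambda_c x)} \;=\; \omega^{q(c)} \, W_{D_c f}(-\lambda_c).$$

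Then I would invoke Theorem~\ref{th:bent_1deriv} applied to the weakly regular bent function $f$: it guarantees $W_{D_c f}(b) = 0$ whenever $\Tr_n(bc) \neq 0$. So it suffices to verify that $\Tr_n(-\lambda_c \cdot c) = -P(c, c)$ is nonzero for every $c \in \fpn^*$. Since the ANF of $q$ has only degree-$2$ monomials, $q$ is homogeneous of degree $2$ over $\fp$ and therefore $q(2c) = 4 q(c)$, giving $P(c, c) = q(2c) - 2 q(c) = 2 q(c)$; the hypothesis $q(c) \neq 0$ combined with $p$ odd then yields $P(c, c) \neq 0$. Hence $\sum_x \omega^{D_c g(x)} = 0$, and Lemma~\ref{le:omega} confirms that $D_c g$ is balanced.

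The main conceptual step is to recognize that the vanishing condition $\Tr_n(\lambda_c c) \neq 0$ demanded by Theorem~\ref{th:bent_1deriv} is exactly the non-degeneracy of the polar form $P$ on the diagonal, which the identity $P(c, c) = 2 q(c)$ transfers directly to the hypothesis on $q$. There is no real obstacle beyond this translation; the assumption that $p$ is odd enters essentially in two places, namely for the applicability of Theorem~\ref{th:bent_1deriv} and to ensure that $2$ is invertible in $\fp$ so that $P(c, c) = 2q(c) \neq 0$.
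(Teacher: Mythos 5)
Your proof is correct and follows essentially the same route as the paper: both reduce $W_{D_cg}(0)$ to $\omega^{q(c)}W_{D_cf}(b)$ for a $b$ realizing the linear part of $D_cq$, invoke the vanishing condition $\Tr_n(bc)\neq 0\Rightarrow W_{D_cf}(b)=0$ from Theorem~\ref{th:bent_1deriv}, and identify $\Tr_n(bc)=2q(c)$ (the paper does this with the explicit trace-form coefficients $a_j$, you with the polar form $P(c,c)=2q(c)$, which is the same computation in coordinate-free language). No gaps.
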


\begin{proof} 
Using Theorem~\ref{th:bent_1deriv}, let us try obtaining new $p$-ary bent functions by adding quadratic functions to a $p$-ary weakly 
regular bent function. Take odd $p$ and a weakly regular bent function $f$ and define $g(x)=f(x)+q(x)$ for some quadratic function 
$q(x)=\Tr_n\big(\sum_{j=0}^{n-1}a_jx^{p^j+1}\big)$ (we do not include degree $1$ or $0$ terms in the expression of $q(x)$ since they play 
no role). We have 
\begin{align*}
D_c g(x)&=D_c f(x)+\Tr_n\Big(\sum_{j=0}^{n-1}a_j(cx^{p^j}+c^{p^j}x+c^{p^j+1})\Big)\\
&=D_c f(x)+\Tr_n\Big(\sum_{j=0}^{n-1}\big((a_j^{p^{n-j}}c^{p^{n-j}}+a_jc^{p^j})x+a_jc^{p^j+1}\big)\Big)\enspace,
\end{align*}
 and therefore, 
\[W_{D_c g}(0)=\omega^{\Tr_n\big(\sum_{j=0}^{n-1}a_jc^{p^j+1}\big)}W_{D_c f}\Big(\sum_{j=0}^{n-1}\big(a_j^{p^{n-j}}c^{p^{n-j}}+a_jc^{p^j}\big)\Big)\enspace.\] 
By \cite[Theorem~5]{CaDi04}, function $g$ is bent if and only if $D_c g$ is balanced for any $c\in\fpn^*$ that is equivalent to $W_{D_c 
g}(0)=0$ according to Lemma~\ref{le:omega}. By Theorem~\ref{th:bent_1deriv}, since $f$ is a weakly regular bent function, then for any 
$b,c\in\fpn$ with $\Tr_n(bc)\neq 0$ we have $W_{D_c f}(b)=0$. Then if $\Tr_n(bc)\neq 0$ for 
$b=\sum_{j=0}^{n-1}\big(a_j^{p^{n-j}}c^{p^{n-j}}+a_jc^{p^j}\big)$ then $W_{D_c g}(0)=0$. We have
\[\Tr_n(bc)=\Tr_n\Big(c\sum_{j=0}^{n-1}\big(a_j^{p^{n-j}}c^{p^{n-j}}+a_jc^{p^j}\big)\Big)=2\Tr_n\Big(\sum_{j=0}^{n-1}a_jc^{p^j+1}\Big)=2q(c)\enspace.\]
Hence, if we have that, for any $c\in\fpn^*$, either $q(c)\neq 0$ or $W_{D_c 
f}\Big(\sum_{j=0}^{n-1}\big(a_j^{p^{n-j}}c^{p^{n-j}}+a_jc^{p^j}\big)\Big)=0$, then $g=f+q$ is a bent function. Indeed, for every $c\neq 0$, 
we have then, thanks to Theorem~\ref{th:bent_1deriv}, that $W_{D_cg}(0)=0$.\qed
\end{proof} 

Let us first see whether there exist functions $q(x)=\Tr_n\big(\sum_{j=0}^{n-1}a_jx^{p^j+1}\big)$ such that, for any $c\in\fpn^*$, we have 
$q(c)\neq 0$ (then for every weakly regular bent function $f$, we shall have that $g=f+q$ is bent, maybe not weakly regular). By the fact 
that the $(p-1)$th power of any nonzero element of $\fp$ equals $1$, this is equivalent to: $\forall x\in \fpn, (q(x))^{p-1}=1-\delta_0(x)$ 
where $\delta_0$ is the Kronecker (or Dirac) function (taking value $1$ at $0$, and value $0$ everywhere else). By the uniqueness of the 
relative trace form, this is equivalent to $\Big(\Tr_n\big(\sum_{j=0}^{n-1}a_jx^{p^j+1}\big)\Big)^{p-1}=1-\delta_0(x)\pmod{x^{p^n}-x}$. But 
calculating the left-hand side is tricky. It is easier to consider the algebraic normal form. So we choose a basis 
$(\alpha_1,\dots,\alpha_n)$ of the vector space $\fpn$ over $\fp$ and we write $x=\sum_{i=1}^n x_i\alpha_i$. Then we know, since $q$ is 
quadratic and vanishes at 0, that $q(x)$ has the form $\sum_{i=1}^n a_ix_i+\sum_{i=1}^n b_ix_i^2+\sum_{1\leq i<j\leq n} c_{i,j}x_ix_j$ 
(where the coefficients of this polynomial and the basis are chosen such that the corresponding univariate representation does not contain 
terms of degree $1$). Then $\delta_0(x)= \prod_{i=1}^n (1-x_i^{p-1})$. The condition is then 
\[\Big(\sum_{i=1}^n a_ix_i+\sum_{i=1}^n b_ix_i^2+\sum_{1\leq i<j\leq n} c_{i,j}x_ix_j\Big)^{p-1}=\sum_{\emptyset\neq I\subseteq\{1,\dots ,n\}}(-1)^{|I|+1}\prod_{i\in I}x_i^{p-1}\enspace.\]
Unfortunately, since the function on the left-hand side has an algebraic degree of at most $2(p-1)$ (see, e.g., \cite{HeKh06_1,Mei22}) and 
the function on the right-hand side has an algebraic degree equal to $n(p-1)$, such functions $q$ can exist only if $n\leq 2$ and indeed in 
such case and for $p=3$, they exist since for $n=1$ we can take $q(x)=x_1^2$ and for $n=2$ we can take $q(x)=x_1^2+x_2^2$. {\color{red}What 
about $p>3$?} For larger $n$ and $p$, the function $\sum_{i=1}^n x_i^{p-1}$ takes a nonzero value at any nonzero input if $n<p$, but it is 
not quadratic. 

We need then to find a weakly regular bent function $f$ and a function $q(x)=\Tr_n\big(\sum_{j=0}^{n-1}a_jx^{p^j+1}\big)$ such that, for 
every nonzero $c\in q^{-1}(0)$, we have $W_{D_c f}\Big(\sum_{j=0}^{n-1}\big(a_j^{p^{n-j}}c^{p^{n-j}}+a_jc^{p^j}\big)\Big)=0$ (and this may 
be easier if $q^{-1}(0)$ has a small size; note that the minimum size of $q^{-1}(0)$ equals $p^n$ minus the maximum Hamming weight of 
quadratic functions). {\color{red}TODO.} 

\begin{remark}
It seems difficult to obtain a characterization of weakly regular bent functions by the second-order derivatives, or even a property of 
such functions which would allow to show that some bent functions are not weakly regular. But some observations can be made. 

Taking any $b,c,d\in\fpn$ and a weakly regular bent function $f$ we have 
\begin{align*}
&W_{D_{c,d}f}(b)=\sum_{x\in\fpn}\omega^{D_{c,d}f(x)-\Tr_n(bx)}\\
&=p^{-3n}\sum_{x,y,z,t,v,w,s\in\fpn}\omega^{f(y)-f(z)-f(t)+f(x)-\Tr_n(bx+v(y-x-c-d)+w(z-x-c)+s(t-x-d))}\\
&=p^{-3n}\sum_{v,w,s\in\fpn}W_f(v)W_f(b-v-w-s)\overline{W_f}(-w)\overline{W_f}(-s)\omega^{\Tr_n(vc+vd+wc+sd)}\\
&=p^{-n}\sum_{v,w,s\in\fpn}\omega^{f^*(v)+f^*(b-v-w-s)-f^*(-w)-f^*(-s)+\Tr_n((v+w)c+(v+s)d)}\\
&=p^{-n}\sum_{v,w,s\in\fpn}\omega^{f^*(v)+f^*(b+w+v+s)-f^*(v+w)-f^*(v+s)-\Tr_n(wc+sd)}\enspace,
\end{align*}
where in the last identity we substitute $w=-v-w$ and $s=-v-s$.

In particular, taking $b=0$ we get 
\begin{align*}
W_{D_{c,d}f}(0)&=\sum_{x\in\fpn}\omega^{D_{c,d}f(x)}\\
&=p^{-n}\sum_{v,w,s\in\fpn}\omega^{D_{w,s}f^*(v)-\Tr_n(wc+sd)}\\
&=p^{-n}\sum_{w,s\in\fpn}W_{D_{w,s}f^*}(0)\omega^{-\Tr_n(wc+sd)}\enspace.
\end{align*}

In \cite[Theorem~5]{MeOzSi18} it is proven that $f$ is bent if and only if $\sum_{c,d\in\fpn}\omega^{D_{c,d}f(x)}\equiv p^n$ and in 
\cite[Corollary~2]{MeOzSi18}, that even weaker criterion holds: $f$ is bent if and only if 
$\sum_{c,d,x\in\fpn}\omega^{D_{c,d}f(x)}=p^{2n}$. Then 
\[\sum_{c,d\in\fpn}W_{D_{c,d}f}(b)=\sum_{x\in\fpn}\omega^{-\Tr_n(bx)}\sum_{c,d\in\fpn}\omega^{D_{c,d}f(x)}=
\left\{\begin{array}{ll}
p^{2n},&\ \mbox{if}\ b=0\\
0,&\ \mbox{otherwise}\enspace.
\end{array}\right.\]
On the other hand,
\begin{align*}
&\sum_{c,d\in\fpn}W_{D_{c,d}f}(b)=\\
&=p^{-n}\sum_{v,w,s\in\fpn}\omega^{f^*(v)+f^*(b+w+v+s)-f^*(v+w)-f^*(v+s)}\sum_{c,d\in\fpn}\omega^{-\Tr_n(wc+sd)}\\
&=p^n\sum_{v\in\fpn}\omega^{f^*(b+v)-f^*(v)}\\
&=p^nW_{D_bf^*}(0)\enspace.
\end{align*}
Thus, $W_{D_bf^*}(0)=0$ if $b\neq 0$ showing that $D_bf^*(x)$ is balanced.
\end{remark} 


\section{Trinomial Cubic Vectorial Bent Functions with Not Weakly Regular Components}
 \label{sec:vec}
In this section, we present a primary univariate construction of ternary cubic vectorial bent functions with three terms in the trace form 
and that have only not weakly regular components. Theorem~\ref{th:trinom_nwr_vec} contains the main result with the proof showing component 
functions are cubic-like bent. Identities (\ref{eq:bent_1deriv1}) for the first-order derivatives proven in Theorem~\ref{th:bent_1deriv} 
are shown not to hold for these functions that allows making the conclusion about being not weakly regular. 

In some cases, we succeed in calculating the Walsh transform coefficients and the dual of component functions. The proof that works here is 
similar to what was used in \cite[Proposition~2]{Do98} in the binary case (see also \cite{Le06}). Then it is also shown that the component 
functions are a particular MMF case of Construction~\ref{co:2}. We expect that component functions of other vectorial functions in 
Theorem~\ref{th:trinom_nwr_vec} (in addition to the case when $k$ is odd and $j\in\{0,2k\}$) are in MMF class as well. The following lemma 
is needed in Theorem~\ref{th:trinom_nwr_vec}. 

\begin{lemma}
 \label{le:vec}
Let $n=4k$, take $j\not\equiv k\pmod 2$ and $t$ odd. Also, take any $\alpha\in\fthreek^*$, $c\in\fthreen^*$, and let 
$b=\zeta^{t(3^k+1)/2}\in\F_{3^{2k}}^*$, where $\zeta$ is a primitive element of $\F_{3^{2k}}$. If $c\in\F_{3^{2k}}^*$ or $\Tr^n_k(bc)\neq 
0$ then 
\begin{equation}
 \label{eq:le2}
\Tr_n\big((\alpha bc^{3^j}+\alpha^{3^{-j}}b^{3^{-j}}c^{3^{-j}})d\big)
\end{equation}
is not identically zero as a function of $d\in\fthreen$. Moreover, for every $c\in\F_{3^{2k}}^*$ there exists $d\in\F_{3^{2k}}^*$ such that 
(\ref{eq:le2}) is not zero and for every $c\in\fthreen^*$ with $\Tr^n_k(bc)\neq 0$ there exists $d\in\fthreek^*$ such that (\ref{eq:le2}) 
is not zero. 
\end{lemma}

\begin{proof} 

Assume $\alpha=\zeta^{e(3^k+1)}$ for some $e\geq 0$. Firstly, if $c\in\F_{3^{2k}}^*$ then also assume $d\in\F_{3^{2k}}^*$ to get
\begin{equation}
 \label{eq:le2_1}
\Tr_n\big((\alpha bc^{3^j}+\alpha^{3^{-j}}b^{3^{-j}}c^{3^{-j}})d\big)=-\Tr_{2k}\big((\alpha bc^{3^j}+\alpha^{3^{-j}}b^{3^{-j}}c^{3^{-j}})d\big)\enspace.
\end{equation}
By the uniqueness of the relative trace form (\ref{eq:le2_1}) ($o(1)=2k$ for modulus $3^{2k}-1$), it is identically zero for all 
$d\in\F_{3^{2k}}$ if and only if $\alpha^{3^j-1}b^{3^j-1}c^{3^{2j}-1}=-1$, since $\alpha$, $b$ and $c$ are nonzero. Solve the latter 
equation for the unknown $c\in\F_{3^{2k}}^*$ that is equivalent to checking if the following modular equation has an integer solution $z$ 
\[e(3^k+1)(3^j-1)+t(3^k+1)(3^j-1)/2+z(3^{2j}-1)\equiv(3^{2k}-1)/2\pmod{3^{2k}-1}\enspace.\]
Obviously, $\gcd(3^{2j}-1,3^{2k}-1)=9^{\gcd(j,k)}-1$ is divisible by $8$ and for this equivalence to have a solution, we need that 
$9^{\gcd(j,k)}-1$ divides $(3^k+1)(3^k-1-t(3^j-1)-2e(3^j-1))/2$. Note that $(3^k-1-t(3^j-1))/2$ is odd since only one of $3^k-1$ and 
$3^j-1$ is divisible by four and $t$ is odd. Thus, $(3^k-1-t(3^j-1)-2e(3^j-1))/2$ is odd. Also, $3^k+1$ is not divisible by $8$ so no 
solution for $z$ exists. Thus, for every $c\in\F_{3^{2k}}^*$ there exists $d\in\F_{3^{2k}}^*$ such that (\ref{eq:le2}) is not zero. 

Secondly, if $\Tr^n_k(bc)\neq 0$ then assume $d\in\F_{3^k}^*$ to get
\begin{equation}
 \label{eq:le2_2}
\Tr_n\big((\alpha bc^{3^j}+\alpha^{3^{-j}}b^{3^{-j}}c^{3^{-j}})d\big)=\Tr_k\big(\Tr^n_k(\alpha bc^{3^j}+\alpha^{3^{-j}}b^{3^{-j}}c^{3^{-j}})d\big)\enspace.
\end{equation}
By the uniqueness of the relative trace form (\ref{eq:le2_2}) ($o(1)=k$ for modulus $3^k-1$), it is identically zero for all $d\in\fthreek$ 
if and only if 
\[\Tr^n_k(\alpha^{3^j}b^{3^j}c^{3^{2j}}+\alpha bc)=\alpha^{3^j}b^{-3^j(3^j-1)}\Tr^n_k(bc)^{3^{2j}}+\alpha\Tr^n_k(bc)=0\]
that is equivalent to $\Tr^n_k(bc)=0$ or $\alpha^{3^j-1}b^{-3^j(3^j-1)}\Tr^n_k(bc)^{3^{2j}-1}=-1$. Since $\Tr^n_k(bc)\neq 0$, we only need 
to check if the latter identity can hold and, in particular, this will require the following modular equation to have some integer solution 
$z$ 
\[e(3^j-1)-t3^j(3^j-1)/2+z(3^{2j}-1)\equiv(3^k-1)/2\pmod{3^k-1}\enspace.\]
Observe that $3^{\gcd(j,k)}-1$ divides $\gcd(3^{2j}-1,3^k-1)$ and for this equivalence to have a solution, we need that $3^{\gcd(j,k)}-1$ 
divides $(3^k-1+t3^j(3^j-1)-2e(3^j-1))/2$ which is odd since only one of $3^k-1$ and $3^j-1$ is divisible by four and $t$ is odd (so no 
solution for $z$ exists). Thus, for every $c\in\fthreen^*$ with $\Tr^n_k(bc)\neq 0$ there exists $d\in\fthreek^*$ such that (\ref{eq:le2}) 
is not zero. 

In both cases, the parity conditions on $k$, $j$ and $t$ play role.\qed 
\end{proof} 

\begin{theorem}
 \label{th:trinom_nwr_vec}
Let $n=4k$, take $j\not\equiv k\pmod 2$ and $t$ odd. Then ternary vectorial function $F:\fthreen\mapsto\fthreek$ given by 
\begin{equation}
 \label{eq:trinom_nwr_vec} 
F(x)=\Tr^n_k\big(x^{3^k+2}-x^{2\cdot 3^k+1}+\zeta^{t(3^k+1)/2} x^{3^j+1}\big)\enspace,
\end{equation} 
where $\zeta$ is a primitive element of $\F_{3^{2k}}$, is a cubic bent function. Moreover, component functions of $F$ are all not weakly 
regular MMF bent functions. In particular, when $k$ is odd and $j\in\{0,2k\}$ then the component bent function $\Tr_k(F(x))$ is covered (up 
to EA equivalence) by the particular MMF form (\ref{eq:co2_sp}) of Construction~\ref{co:2} with $d=k$ and $\pi()$ identical mapping, and it 
is dual bent. 
\end{theorem}

\begin{proof} 
Denoting $b=\zeta^{t(3^k+1)/2}$ we have that $b^{3^k}=-b$ (since 
\[3^k t(3^k+1)/2=t(3^{2k}-1)/2+t(3^k+1)/2\]
and $t$ is odd) and then $\Tr^n_k(b)=0$. 

To prove bentness, we show that all component functions of $F$ are bent by analyzing their second-order derivatives in the direction of 
elements $c,d\in\fthreen^*$. For any $\alpha\in\fthreek^*$, denote $f_{\alpha}(x)=\Tr_k(\alpha\Tr^n_k(F(x)))$ a component function of $F$. 
Begin with the first-order derivative 
\begin{align}
 \label{eq:nwr2_1d_vec}
\nonumber&D_c f_{\alpha}(x)=\Tr_k\big(\alpha\Tr^n_k(D_cF(x))\big)\\
\nonumber&=\Tr_k\Big(\alpha\Tr^n_k\big(-c^{2\cdot 3^k+1}+c^{3^k+2}+bc^{3^j+1}+bcx^{3^j}+bc^{3^j}x-(c^{2\cdot 3^k}+c^{3^k+1})x\\
&\ +(c^2+c^{3^k+1})x^{3^k}+(c^{3^k}-c)x^{3^k+1}+c^{3^k}x^2-cx^{2\cdot 3^k}\big)\Big)
\end{align}
and then 
\begin{align*}
&D_{c,d}f_{\alpha}(x)=\Tr_k\Big(\alpha\Tr^n_k\big((c^{3^k}d+cd^{3^k}-cd)x^{3^k}+(c^{3^k}d^{3^k}-c^{3^k}d-cd^{3^k})x\\
&\ -c^{2\cdot 3^k}d-cd^{2\cdot 3^k}+c^{3^k+1}d^{3^k}+c^{3^k}d^{3^k+1}-c^{3^k+1}d-cd^{3^k+1}+c^{3^k}d^2+c^2 d^{3^k}\\
&\ +b(c^{3^j}d+cd^{3^j})\big)\Big)\\
&=\Tr_k\Big(\alpha\Tr^n_k\big((cd^{3^{3k}}+c^{3^{3k}}d-c^{3^{3k}}d^{3^{3k}}+c^{3^k}d^{3^k}-c^{3^k}d-cd^{3^k})x\\
&\ +(c^{3^k}-c)d^{3^k+1}-(c^{3^{2k}}-c)^{3^k}d^2+(c^{3^{3k}+1}+c^{2\cdot 3^{3k}}-c^{2\cdot 3^k}-c^{3^k+1})d\\
&\ +b(c^{3^j}d+cd^{3^j})\big)\Big)\\
&=\Tr_k\Big(\alpha\Tr^n_k\big(L_c(d)x+(c^{3^k}-c)d^{3^k+1}-(c^{3^{2k}}-c)^{3^k}d^2\\
&\ +(c^{3^{2k}}-c)^{3^k}(c^{3^{3k}}+c^{3^k}+c)d\big)\Big)+\Tr_n\big((\alpha bc^{3^j}+\alpha^{3^{-j}}b^{3^{-j}}c^{3^{-j}})d\big)\enspace,
\end{align*}
where the linearized polynomial
\[L_c(d)=(c^{3^k}-c)^{3^{3k}}d^{3^{3k}}+(c^{3^k}-c)d^{3^k}+(c^{3^{2k}}-c)^{3^k}d\enspace.\]
Note that for every $s\in\fthreek$,
\[D_{c,d+s}f_{\alpha}(x)=D_{c,d}f_{\alpha}(x)+D_{c,s}f_{\alpha}(x)\enspace.\]

If $c\in\fthreek^*$ then $L_c(d)\equiv 0$ and
\[D_{c,d}f_{\alpha}(x)=\Tr_n\big((\alpha bc^{3^j}+\alpha^{3^{-j}}b^{3^{-j}}c^{3^{-j}})d\big)\enspace.\]
By Lemma~\ref{le:vec}, for every $c\in\fthreek^*$ there exists $d\in\F_{3^{2k}}^*$ such that $D_{c,d}f_{\alpha}$ is a nonzero constant 
function. If $c\in\F_{3^{2k}}^*$ then also take $d\in\F_{3^{2k}}^*$ and observing that $L_c(d)=0$ we obtain 
\begin{align*}
D_{c,d}f_{\alpha}(x)&=\Tr_k\big(\alpha\Tr^n_k\big((c^{3^k}-c)d^{3^k+1}\big)\big)+\Tr_n\big((\alpha bc^{3^j}+\alpha^{3^{-j}}b^{3^{-j}}c^{3^{-j}})d\big)\\
&=\Tr_n\big((\alpha bc^{3^j}+\alpha^{3^{-j}}b^{3^{-j}}c^{3^{-j}})d\big)\enspace.
\end{align*}
By Lemma~\ref{le:vec}, for every $c\in\F_{3^{2k}}^*$ there exists $d\in\F_{3^{2k}}^*$ such that $D_{c,d}f_{\alpha}$ is a nonzero constant 
function. Further assume $c\notin\fthreek$. 

Take $c\in\fthreen$ with $\Tr^n_k(bc)=0$ (note this is equivalent to $c+c^{3^{2k}}\in\fthreek$ and to $(c-c^{3^k})^{3^{2k}}=-(c-c^{3^k})$) 
and this holds if and only if $c=b^{-1}(l-l^{3^k})$ for some $l\in\fthreen$. Require additionally that $c\notin\fthreek$ or, equivalently, 
$l\notin\F_{3^{2k}}$. Then 
\[bL_c(d)=(l^{3^{3k}}-l^{3^k})d^{3^{3k}}+(l^{3^{2k}}-l)d^{3^k}-(l^{3^{3k}}+l^{3^{2k}}-l^{3^k}-l)d\]
and it can be checked directly that, in addition to $\fthreek$, other root of $L_c(d)$ is $l^{3^{3k}}+l$. This means that for such $c$, we 
know $3^{2k}$ roots of $L_c(d)$ belonging to the $2$-dimensional $\fthreek$-linear vector space expanded by the basis $\{1,l^{3^{3k}}+l\}$. 
Further, take $d=(l^{3^{3k}}+l)s$ with $s\in\fthreek^*$ to obtain 
\begin{align*}
&D_{c,d}f_{\alpha}(x)=\Tr_k\Big(\alpha\Tr^n_k\Big(\big((c^{3^k}-c)(l^{3^{3k}}+l)^{3^k}-(c^{3^{2k}}-c)^{3^k}(l^{3^{3k}}+l)\big)(l^{3^{3k}}+l)s^2\\
&\ +(c^{3^{2k}}-c)^{3^k+1}(l^{3^{3k}}+l)s\Big)\Big)+\Tr_n\big((\alpha bc^{3^j}+\alpha^{3^{-j}}b^{3^{-j}}c^{3^{-j}})(l^{3^{3k}}+l)s\big)\\
&=\Tr_k\Big(\alpha\Tr^n_k\Big(\big((c^{3^k}-c)(l^{3^k}+l)-(c^{3^{2k}}-c)^{3^k}(l^{3^{3k}}+l)\\
&\ +(c^{3^k}-c)^{3^k}(l^{3^k}+l)^{3^k}+(c^{3^{2k}}-c)(l^{3^k}+l)\big)ls^2\Big)\Big)+\Tr_n(Ks)\\
&=\Tr_k\Big(\alpha\Tr^n_k\Big((c^{3^{2k}}+c^{3^k}+c)(l^{3^k}+l)ls^2\\
&\ +\big((c^{3^k}-c)^{3^k}(l^{3^k}+l)^{3^k}-(c^{3^{2k}}+c^{3^k}+c)(l^{3^{3k}}+l)\big)ls^2\Big)\Big)+\Tr_n(Ks)\\
&=\Tr_k\Big(\alpha\Tr^n_k\Big(\big((c^{3^{2k}}-c)^{3^k}(l-l^{3^{2k}})^{3^k}+(c^{3^k}-c)^{3^k}(l^{3^k}+l)^{3^k}\big)ls^2\Big)\Big)+\Tr_n(Ks)\\
&=\Tr_k\Big(\alpha\Tr^n_k\Big(\big((c^{3^{2k}}-c)(l-l^{3^{2k}})+(c^{3^k}-c)(l^{3^k}+l)\big)l^{3^{3k}}s^2\Big)\Big)+\Tr_n(Ks)\\
&=\Tr_k\Big(\alpha\Tr^n_k\Big(\big((c^{3^{2k}}+c^{3^k}+c)l^{3^{3k}+1}-(c^{3^{2k}}-c)l^{3^{3k}+3^{2k}}+(c^{3^k}-c)l^{3^{3k}+3^k}\big)s^2\Big)\Big)+\Tr_n(Ks)\\
&=\Tr_k\Big(\alpha\Tr^n_k\Big(\big((c^{3^{2k}}+c^{3^k}+c)l^{3^{3k}+1}-(c^{3^{3k}}-c^{3^k})l^{3^{3k}+1}+(c^{3^k}-c)l^{3^k(3^{2k}+1)}\big)s^2\Big)\Big)+\Tr_n(Ks)\\
&=\Tr_k\Big(\alpha\Tr^n_k\Big(\big((c^{3^{2k}}-c^{3^k}+c-c^{3^{3k}})l^{3^{3k}+1}+(c^{3^k}-c)l^{3^k(3^{2k}+1)}\big)s^2\Big)\Big)+\Tr_n(Ks)\\
&=\Tr_{2k}\big(\alpha l^{3^k(3^{2k}+1)}s^2\Tr^n_{2k}(c^{3^k}-c)\big)+\Tr_k\big(\Tr^n_k(K)s\big)\\
&=\Tr_k\big(\Tr^n_k(K)s\big)\enspace,
\end{align*}
where $K$ is defined explicitly and $c^{3^{3k}}+c^{3^k}=c^{3^{2k}}+c$ is used a number of times. By the uniqueness of the relative trace 
form $\Tr_k\big(\Tr^n_k(K)s\big)$ ($o(1)=k$ for modulus $3^k-1$), it is identically zero for all $s\in\fthreek^*$ if and only if 
$\Tr^n_k(K)=0$. Use $c=b^{-1}(l-l^{3^k})$ to obtain 
\begin{align*}
\Tr^n_k(K)&=\Tr^n_k\Big(\big(\alpha b^{3^n-3^j}(l-l^{3^k})^{3^j}+\alpha^{3^{-j}}(l-l^{3^k})^{3^{-j}}\big)(l^{3^{3k}}+l)\Big)\\
&=\Tr^n_k\big((\alpha b^{3^n-3^j}l^{3^j}+\alpha^{3^{-j}}l^{3^{-j}})(l-l^{3^{2k}})\big)\\
&=\Tr^{2k}_k\big(\alpha b^{3^n-3^j}(l-l^{3^{2k}})^{3^j+1}+\alpha^{3^{-j}}(l-l^{3^{2k}})^{3^{-j}+1}\big)\\
&=\Tr^{2k}_k\big(\alpha b^{3^n-3^j}r^{-3^j-1}v^{3^j+1}+\alpha^{3^{-j}}r^{-3^{-j}-1}v^{3^{-j}+1}\big)\\
&=\Tr^{2k}_k\big(\alpha r^{t(3^k+1)(3^n-3^j)-3^j-1}v^{3^j+1}+\alpha^{3^{-j}}r^{-3^{-j}-1}v^{3^{-j}+1}\big)\\
&=\Tr^{2k}_k\big(\alpha Rr^{-(3^j+1)}v^{3^j+1}+\alpha^{3^{-j}}r^{-(3^{-j}+1)}v^{3^{-j}+1}\big)\enspace,
\end{align*}
where $r=\xi^{(3^{2k}+1)/2}$ and $\xi$ is a primitive element of $\fthreen$ (so $r^{3^{2k}}=-r$ and $b=r^{t(3^k+1)}$), 
$v=\Tr^n_{2k}(rl)=r(l-l^{3^{2k}})\neq 0$ (since $l\notin\F_{3^{2k}}$) and $R=r^{t(3^k+1)(3^n-3^j)}=\eta^{t(3^n-3^j)/2}\in\fthreek^*$ with 
$\eta$ a primitive element of $\fthreek$. Further, $\Tr^n_k(K)=0$ if and only if 
\begin{align*}
&\Tr^{2k}_k\big(\alpha^{3^j}R^{3^j}r^{-3^j(3^j+1)}v^{3^j(3^j+1)}+\alpha r^{-(3^j+1)}v^{3^j+1}\big)=\\
&=\Tr^{2k}_k\big(\alpha^{3^j}R^{3^j}w^{3^j}+\alpha w\big)=\alpha^{3^j}R^{3^j}\Tr^{2k}_k(w)^{3^j}+\alpha\Tr^{2k}_k(w)=0\enspace,
\end{align*}
where we denote $w=r^{-(3^j+1)}v^{3^j+1}\in\F_{3^{2k}}^*$. This holds if and only if $\Tr^{2k}_k(w)=0$ or 
$\alpha^{3^j-1}R^{3^j}\Tr^{2k}_k(w)^{3^j-1}=-1$. 

Assume $\alpha=\eta^e$ for some $e\geq 0$. For the latter identity to hold, in particular, we need the following modular equation to have 
an integer solution $z$ 
\[e(3^j-1)+t3^j(3^n-3^j)/2+z(3^j-1)\equiv(3^k-1)/2\pmod{3^k-1}\enspace.\]
Since $\gcd(3^j-1,3^k-1)=3^{\gcd(j,k)}-1$ is even, then for this equivalence to have a solution we need $(3^k-1-t3^j(3^n-3^j)-2e(3^j-1))/2$ 
to be even, but it is odd since only one of $3^k-1$ and $3^n-3^j$ is divisible by four and $t$ is odd (so no solution for $z$ exists). Here 
the parity conditions on $k$, $j$ and $t$ play role. 

It remains to prove that $\Tr^{2k}_k(w)\neq 0$. Since $r$ and $v$ (and $w$) are nonzero, the opposite is equivalent to 
\[w^{3^k-1}=r^{-(3^k-1)(3^j+1)}v^{(3^k-1)(3^j+1)}=-1\enspace.\]
For the latter identity to hold, in particular, we need the following modular equation to have an integer solution $z$ 
\[-(3^k-1)(3^j+1)/2+z(3^k-1)(3^j+1)\equiv(3^{2k}-1)/2\pmod{3^{2k}-1}\enspace.\]
Since 
\[\gcd((3^k-1)(3^j+1),3^{2k}-1)=(3^k-1)\gcd(3^j+1,3^k+1)\]
and $\gcd(3^j+1,3^k+1)$ is even, then for this equivalence to have a solution we need $(3^k+1+3^j+1)/2$ to be even, but it is odd since 
only one of $3^k+1$ and $3^j+1$ is divisible by four (so no solution for $z$ exists). Here the parity conditions on $k$ and $j$ play role. 


Conclude that if $\Tr^n_k(bc)=0$ and $c\notin\fthreek$, then $D_{c,d}f_{\alpha}$ is nonzero for some $d=(l^{3^{3k}}+l)s$ with 
$s\in\fthree^*k$. 

Finally, assume $\Tr^n_k(bc)\neq 0$ (then $c\notin\fthreek$) and also take $d\in\F_{3^k}^*$ that gives 
\[D_{c,d}f_{\alpha}(x)=\Tr_n\big((\alpha bc^{3^j}+\alpha^{3^{-j}}b^{3^{-j}}c^{3^{-j}})d\big)\enspace.\]
By Lemma~\ref{le:vec}, for every such $c$, there exists $d\in\fthreek^*$ such that $D_{c,d}f$ is a nonzero constant function. 

Combining all particular cases, conclude that for every $c\in\fthreen^*$ there exists $d\in\fthreen^*$ such that $D_{c,d}f_{\alpha}(x)$ is 
a nonzero constant defined by $c$ and $d$ that does not depend on $x$. Thus, $f_{\alpha}$ is cubic-like bent and it remains to apply 
Theorem~\ref{th:3like_bent} to conclude that $f_{\alpha}$ is bent. Since this holds for any $\alpha\in\fthreek^*$, we conclude that $F$ is 
a cubic vectorial bent function. 

To check if $f_{\alpha}$ is weakly regular, take the first-order derivative of $f_{\alpha}$ in (\ref{eq:nwr2_1d_vec}). Additionally 
assuming $c\in\fthreek^*$, we obtain 
\begin{align*}
D_{c}f_{\alpha}(x)&=\Tr_k\big(\alpha\Tr^n_k\big(bc^{3^j+1}+bcx^{3^j}+bc^{3^j}x\big)\big)\\
&=\Tr_n\big(\alpha bc^{3^j+1}+(\alpha bc^{3^j}+\alpha^{3^{-j}}b^{3^{-j}}c^{3^{-j}})x\big)\enspace.
\end{align*}
Then the Walsh transform of $D_{c}f_{\alpha}$ in point $q\in\fthreen$ is
\[W_{D_{c}f_{\alpha}}(q)=\left\{\begin{array}{ll}
3^n\omega^{\Tr_n(\alpha bc^{3^j+1})},&\ \mbox{if}\ q=\alpha bc^{3^j}+\alpha^{3^{-j}}b^{3^{-j}}c^{3^{-j}}\\
0,&\ \mbox{otherwise .}
\end{array}\right.\]
By Theorem~\ref{th:bent_1deriv}, if $f_{\alpha}$ is weakly regular then $W_{D_{c}f_{\alpha}}(q)=W_{D_{c}f_{\alpha}}(-q)$. In our case, this 
is possible only if $\alpha bc^{3^j}+\alpha^{3^{-j}}b^{3^{-j}}c^{3^{-j}}=0$ and this does not hold by Lemma~\ref{le:vec}. Thus, 
$f_{\alpha}$ is a not weakly regular bent function.\qed 
\end{proof} 

Now consider particular component function $f(x)=\Tr_k(F(x))$ of vectorial bent function $F$ in (\ref{eq:trinom_nwr_vec}) with $k$ odd and 
$j\in\{0,2k\}$. Since polynomial $x^4+x-1$ is irreducible over $\fthree$ it is also irreducible over $\fthreek$ (see 
\cite[Lemma~?]{LiNi97}). Taking $a\in\F_{3^4}$ such that $a^4+a=1$ we obtain that $\fthreen=\fthreek[a]$. Note that $a$ is a primitive 
element of $\F_{3^4}$. In particular, any $x\in\fthreen$ can be uniquely represented as 
\begin{equation}
 \label{eq:x_exp1}
x=x_3 a^3+x_2 a^2+x_1 a+x_0
\end{equation}
with $x_i\in\fthreek$. Using $a^4=1-a$ calculate that
\begin{align*}
\Tr^n_k(a)&=a+a^{3^k}+a^{3^{2k}}+a^{3^{3k}}=a+a^3+a^9+a^{27}=0=\Tr^n_k(a^3)\\
\Tr^n_k(a^2)&=a^2+a^6+a^{18}-a^{14}=0\enspace.
\end{align*}
Also $\Tr^n_k(1)=1$ and note that 
\begin{align*}
&\Tr^n_{2k}(a)=\Tr^n_{2k}(a^2)=\Tr^n_{2k}(a^9)=\Tr^n_{2k}(a^{18})=a^3+a^2-a=-a^{20}=-I\\
&\Tr^n_{2k}(a^3)=\Tr^n_{2k}(a^6)=\Tr^n_{2k}(a^{27})=\Tr^n_{2k}(a^{54})=-a^3-a^2+a=a^{20}=I\enspace,
\end{align*}
where $I$ is a primitive $4$th root of unity in $\fthreen$ (and $I\in\F_{3^2}$). 

One of the component bent functions of $F$ in (\ref{eq:trinom_nwr_vec}) taken with $j=2k$ and $t=(3^k-1)/2$ (which is odd for odd $k$) is 
\begin{equation}
 \label{eq:trinom_nwr1I} 
f(x)=\Tr_k(F(x))=\Tr_n\big(x^{3^k+2}-x^{2\cdot 3^k+1}+I x^{3^{2k}+1}\big)\enspace.
\end{equation}
Since $x^{3^{2k}+1}\in\F_{3^{2k}}$, the above trace identities mean that in (\ref{eq:trinom_nwr_vec}) with $j=2k$, the coefficient at the 
quadratic term can also be taken equal to $a$ or $a^2$ for every $a$ satisfying $a^4+a=1$ and the resulting component bent functions are 
either equal or EA equivalent to (\ref{eq:trinom_nwr1I}) (substitution $x=-x$). 

Also
\begin{equation}
 \label{eq:2univ}
\Tr^n_k(x)=x_0,\ \Tr^n_k(ax)=x_3,\ \Tr^n_k(a^2 x)=x_2,\ \Tr^n_k(a^3 x)=x_1
\end{equation}
that can be used to get back to the univariate representation.

Denoting $b=\zeta^{t(3^k+1)/2}$ it was already noted that $b^{3^k}=-b$. Obviously, $\Tr^n_k(b)=0$ and calculate
\begin{equation}
 \label{eq:Tr_ba}
\Tr^n_k(b a)=\Tr^n_k(b a^2)=-ba(a^2+a-1)=ba^{20}=-\Tr^n_k(b a^3)\enspace.
\end{equation}

Using (\ref{eq:x_exp1}), convert $f(x)$ in its univariate representation into the polynomial of four variables over $\fthreek$. Note that 
$3^{2k}\equiv 9\pmod{80}$ for odd $k$ and if $k\equiv 1\pmod{4}$ (resp. $k\equiv 3\pmod{4}$) then $3^k\equiv 3\pmod{80}$ (resp. $3^k\equiv 
27\pmod{80}$). First, take $\Tr_n\big(x^{3^k+1}(x-x^{3^k})\big)$ that in $4$ variables becomes, when $k\equiv 1\pmod{4}$, 
\begin{align*}
&\Tr_n\big((x_3 a^9+x_2 a^6+x_1 a^3+x_0)(x_3 a^3+x_2 a^2+x_1 a+x_0)\\
&\ \times(x_3 a^3+x_2 a^2+x_1 a+x_0-x_3 a^9-x_2 a^6-x_1 a^3-x_0)\big)
\end{align*}
and, when $k\equiv 3\pmod{4}$,
\begin{align*}
&\Tr_n\big((x_3 a+x_2 a^{54}+x_1 a^{27}+x_0)(x_3 a^3+x_2 a^2+x_1 a+x_0)\\
&\ \times(x_3 a^3+x_2 a^2+x_1 a+x_0-x_3 a-x_2 a^{54}-x_1 a^{27}-x_0)\big)\enspace.
\end{align*}
After multiplication, reduction using $a^4=1-a$, knowing that $\Tr^n_k(a^i)=0$ for $i=1,2,3$ and using (\ref{eq:Tr_ba}), we obtain for any 
odd $k$, 
\[\Tr_n\big(x^{3^k+1}(x-x^{3^k})\big)=\Tr_k\big((-1)^{(k-1)/2}(x_1+x_1^2 x_2+x_1 x_2^2+x_2+x_1^2 x_3+x_2 x_3^2+x_3)\big)\enspace.\]

Now take the remaining quadratic term. It $j=0$ then
\[\Tr_n(bx^2)=\Tr_k\big(ba^{20}(x_1^2+x_1 x_2-x_2^2+x_1 x_3-x_3^2-(x_1+x_2-x_3)x_0)\big)\]
and if $j=2k$ then
\begin{align*}
\Tr_n(bx^{3^{2k}+1})&=\Tr_n\big(b(x_3 a^{27}+x_2 a^{18}+x_1 a^9+x_0)(x_3 a^3+x_2 a^2+x_1 a+x_0)\big)\\
&=\Tr_k\big(-b a^{20}(x_1^2+x_1 x_2-x_2^2+x_1 x_3-x_3^2+(x_1+x_2-x_3)x_0)\big)\enspace.
\end{align*}

Adding the two expressions, we obtain
\begin{align*}
f(x)&=\Tr_k\big((-1)^{(k-1)/2}(x_1+x_1^2 x_2+x_1 x_2^2+x_2+x_1^2 x_3+x_2 x_3^2+x_3)\\
&\ +(-1)^{j/2}b a^{20}(x_1^2+x_1 x_2-x_2^2+x_1 x_3-x_3^2)-b a^{20}(x_1+x_2-x_3)x_0\big)\\
&=\Tr_k\big(g(x_1,x_2,x_3)-b a^{20}(x_1+x_2-x_3)x_0\big)\\
&=\Tr_k\big(g(x_1,x_2,x_1+x_2-b^{-1} a^{20} x_3)+x_0 x_3\big)\\
&=\Tr_k\big(\tilde{g}(x_1,x_2,x_3)+x_0 x_3\big)\enspace,
\end{align*}
where we substitute $x_3=-b a^{20}(x_1+x_2-x_3)$ and, using $s_k=(-1)^{(k-1)/2}$ and $s_j=(-1)^{j/2}$ for the signs,
\begin{align*}
&\tilde{g}(x_1,x_2,x_3)=B\big(s_k B x_2+s_j\big)x_3^2\\
&\ +s_k B(x_2^2-x_1^2+x_1 x_2-1)x_3+s_j(x_2-x_1)x_3\\
&\ +s_k(x_1^3+x_2^3-x_1-x_2)-s_j B^{-1}(x_1^2+x_2^2)
\end{align*}
denoting $B=b^{-1}a^{20}\in\fthreek^*$, observe that $B^2=-b^{-2}$ and $B^{-1}=-b a^{20}=-b^2 B$. Note that 
$\Tr_k\big(\tilde{g}(x_1,x_2,x_3)+x_0 x_3\big)$ is the particular form (\ref{eq:co2_sp}) of Construction~\ref{co:2} with $d=k$ and $\pi()$ 
identical mapping (here we use the similarity between $\fthree^k$ and $\fthreek$, also for the conventional dot product of $\fthree^k$ take 
the standard inner product for $\fthreek$ that is $\Tr_k()$). Therefore, it suffices to show that 
$\Tr_k(\tilde{g}_{y_0}(x_1,x_2))=\Tr_k(\tilde{g}(x_1,x_2,y_0))$ is bent for any $y_0\in\fthreek$. 

Remarkable that for $t=(3^k-1)/2$ the underlying polynomial over $\fthreek$ does not depend on $k$ meaning that our bent function belongs 
the rare class of exceptional polynomials. 


Take any $y\in\fthreen$ expanded as $y=y_3 a^3+y_2 a^2+y_1 a+y_0$ with $y_i\in\fthreek$, then $\Tr_n(xy)=\Tr_k(y_0 x_0+y_3 x_1+y_2 x_2+y_1 
x_3)$, and compute the Walsh transform of $f$ in point $y$. 
\begin{align*}
W_f(y)&=\sum_{x\in\fthreen}\omega^{f(x)-\Tr_n(xy)}\\
&=\sum_{x_0,x_1,x_2,x_3\in\fthreek}\omega^{\Tr_k\big(\tilde{g}(x_1,x_2,x_3)+(x_3-y_0)x_0-y_3 x_1-y_2 x_2-y_1 x_3\big)}\\
&=3^k\omega^{-\Tr_k(y_0 y_1)}\sum_{x_1,x_2\in\fthreek}\omega^{\Tr_k\big(\tilde{g}(x_1,x_2,y_0)-y_3 x_1-y_2 x_2\big)}\\
&=3^k\omega^{-\Tr_k(y_0 y_1)}W_{\tilde{g}_{y_0}}(y_3,y_2)\enspace,
\end{align*}
where $\tilde{g}_{y_0}(x_1,x_2)=\tilde{g}(x_1,x_2,y_0)$ and functions $\Tr_k(\tilde{g}_{y_0}(x_1,x_2))$ are of degree $2$ both in $x_1$ and 
$x_2$. By \cite[Proposition~1]{HeKh06_1}, all quadratic $p$-ary bent functions are (weakly) regular, i.e., dual bent. Therefore, as 
explained in Construction~\ref{co:2}, if we prove that $\Tr_k(\tilde{g}_{y_0})$ are bent then $f$ is proven to be dual bent. 

The Walsh transform of $\Tr_k(\tilde{g}_{y_0})$ can be calculated using (\ref{eq:quad_Wtc}) as
\begin{align*}
&W_{\Tr_k(\tilde{g}_{y_0})}(y_3,y_2)=\omega^{\Tr_k\big(B y_0(s_j y_0-s_k)\big)}\\
&\ \times\sum_{x_1,x_2\in\fthreek}\omega^{\Tr_k\big(s_k B^2 x_2 y_0^2+s_k B(x_2^2-x_1^2+x_1 x_2)y_0+s_j(x_2-x_1)y_0-s_j B^{-1}(x_1^2+x_2^2)-y_3 x_1-y_2 x_2\big)}\\
&=\omega^{\Tr_k\big(B y_0(s_j y_0-s_k)\big)}\\
&\ \times\sum_{x_1,x_2\in\fthreek}\omega^{\Tr_k\big((s_k B y_0-s_j B^{-1})x_2^2-(s_k B y_0+s_j B^{-1})x_1^2\big)}\\
&\ \times\omega^{\Tr_k\big((s_k B^2 y_0^2+s_j y_0-y_2)x_2-(s_j y_0+y_3)x_1+s_k B y_0 x_1 x_2\big)}\\
&=\eta(s_k B y_0-s_j B^{-1})i^k 3^{k/2}\omega^{\Tr_k\big(B y_0(s_j y_0-s_k)\big)}\\
&\ \times\sum_{x_1\in\fthreek}\omega^{-\Tr_k\big((s_k B y_0+s_j B^{-1})x_1^2+(s_j y_0+y_3)x_1+(s_k B^2 y_0^2+s_j y_0-y_2+s_k B y_0 x_1)^2(s_k B y_0-s_j B^{-1})^{-1}\big)}\\
&=\eta(s_k B y_0-s_j B^{-1})i^k 3^{k/2}\omega^{\Tr_k\big(B y_0(s_j y_0-s_k)+(s_k B^2 y_0^2+s_j y_0-y_2)^2(s_k B y_0-s_j B^{-1})^{-1}\big)}\\
&\ \times\sum_{x_1\in\fthreek}\omega^{\Tr_k\big(\big((B^{-2}+B^2 y^2_0)x_1^2+(B^3 y_0^3-s_k B y_0(y_2+y_3)+B^{-1}(y_0+s_j y_3))x_1\big)(s_k B y_0-s_j B^{-1})^{-1}\big)}\\
&=-3^k\eta(B^{-2}+B^2 y^2_0)\omega^{\Tr_k\big(B y_0(s_j y_0-s_k)+(s_k B^2 y_0^2+s_j y_0-y_2)^2(s_k B y_0-s_j B^{-1})^{-1}\big)}\\
&\ \times\omega^{-\Tr_k\big((B^3 y_0^3-B s_k y_0(y_2+y_3)+B^{-1}(y_0+s_j y_3))^2(B^{-2}+B^2 y^2_0)^{-1}\big)}\enspace,
\end{align*}
where $i$ is a complex primitive fourth root of unity and $\eta()$ is the quadratic character of $\fthreek$. 

Here we assumed that $s_k B y_0\neq s_j B^{-1}$ and $B^2 y^2_0\neq -B^{-2}$. Since $B^{-4}=\zeta^{2t(3^k+1)}\in\fthreek$ and $-1$ is a 
nonsquare in $\fthreek$ for odd $k$, then $y_0^2\neq -B^{-4}$, and in the case when $y_0=s_k s_j B^{-2}$ we have
\begin{align*}
&W_{\Tr_k(\tilde{g}_{y_0})}(y_3,y_2)=\\
&=\omega^{\Tr_k\big(s_j B^{-1}(B^{-2}-1)\big)}\sum_{x_1,x_2\in\fthreek}\omega^{\Tr_k\big(s_j B^{-1}x_1^2-(s_k B^{-2}+y_2-s_j B^{-1} x_1)x_2-(s_k B^{-2}+y_3)x_1\big)}\\
&=3^k\omega^{\Tr_k\big(s_j B^{-1}(s_k B^{-1}+y_2 B)^2-s_j(s_k B^{-2}+y_3)(s_k B^{-1}+y_2 B)\big)}\\
&=3^k\omega^{\Tr_k\big(s_j(y_2 B+s_k B^{-1})(y_2-y_3)\big)}\enspace.
\end{align*}
Thus, $\Tr_k(\tilde{g}_{y_0})$ is bent for any $y_0\in\fthreek$.

Compute the Walsh transform in point $-y$ using (\ref{eq:quad_Wtc}). 
\begin{align}
 \label{eq:b0notsp} 
\nonumber&W_f(-y)=\sum_{x\in\fthreen}\omega^{f(x)+\Tr_n(xy)}\\
\nonumber&=\sum_{x_0,x_1,x_2,x_3\in\fthreek}\omega^{\Tr_k\big(B^{-1}(x_1+x_2-x_3)x_0+y_0 x_0+g(x_1,x_2,x_3)+y_3 x_1+y_2 x_2+y_1 x_3\big)}\\
\nonumber&=3^k\sum_{x_1,x_2\in\fthreek}\omega^{\Tr_k\big(g(x_1,x_2,x_1+x_2+B y_0)+y_3 x_1+y_2 x_2+y_1(x_1+x_2+B y_0)\big)}\\
\nonumber&=3^k\omega^{\Tr_k\big(s_k B y_0(1-s_k y_0+s_k y_1)\big)}\\
\nonumber&\times\sum_{x_1,x_2\in\fthreek}\omega^{\Tr_k\big((B^{-1}+s_k B y_0)x_1^2+(B^{-1}-s_k B y_0)x_2^2-s_k B y_0 x_1 x_2+(y_3+y_1-y_0)x_1+(y_0+y_1+y_2+s_k B^2 y_0^2)x_2\big)}\\
\nonumber&=3^k\eta(B^{-1}+s_k B y_0) i^k 3^{k/2}\omega^{\Tr_k\big(B y_0(s_k-y_0+y_1)\big)}\\
\nonumber&\times\sum_{x_2\in\fthreek}\omega^{\Tr_k\big((B^{-1}-s_k B y_0)x_2^2+(y_0+y_1+y_2+s_k B y_0^2)x_2-(y_3+y_1-y_0-s_k B y_0 x_2)^2 (B^{-1}+s_k B y_0)^{-1}\big)}\\
\nonumber&=3^k\eta(B^{-1}+s_k B y_0) i^k 3^{k/2}\omega^{\Tr_k\big(B y_0(s_k-y_0+y_1)-(y_0-y_1-y_3)^2(B^{-1}+s_k B y_0)^{-1}\big)}\\
\nonumber&\times\sum_{x_2\in\fthreek}\omega^{\Tr_k\Big(\big((B^{-2}+ B^2 y_0^2)x_2^2+(B^3 y_0^3+s_k B y_0(y_2-y_3)+B^{-1}(y_0+y_1+y_2))x_2\big)(B^{-1}+s_k B y_0)^{-1}\Big)}\\
\nonumber&=-3^{2k}\eta(B^{-2}+B^2 y_0^2)\\
\nonumber&\times\omega^{-\Tr_k\big((y_0-y_1-y_3)^2 (B^{-1}+s_k B y_0)^{-1}+(B^3 y_0^3+s_k B y_0(y_2-y_3)+B^{-1}(y_0+y_1+y_2))^2 (B^{-2}+B^2 y_0^2)^{-1}\big)}\\
&\times\omega^{-\Tr_k\big((B^{-1}+s_k B y_0)^{-2}-B y_0(s_k-y_0+y_1)\big)}\enspace,
\end{align}
where $i$ is a complex primitive fourth root of unity and $\eta$ is the quadratic character of $\fthreek$. 

Here we assumed that $y_0\neq-s_k B^{-2}$ and $y_0^2\neq -B^{-4}$. The latter holds as noted above and the case when $y_0=-s_k B^{-2}$ we 
consider separately. If $y_0=0$ then 
\begin{align*}
W_f(-y)&=3^k\sum_{x_1,x_2\in\fthreek}\omega^{\Tr_k\big(B^{-1} x_1^2+B^{-1} x_2^2+(y_1+y_3)x_1+(y_1+y_2)x_2\big)}\\
&=-3^{2k}\omega^{-\Tr_k\big(B(y_1+y_2)^2+B(y_1+y_3)^2\big)}\enspace.
\end{align*}
If $y_0=-s_k B^{-2}$ then
\begin{align}
 \label{eq:b0sp}
\nonumber W_f(-y)&=3^k\omega^{\Tr_k(B^{-1}(1-s_k y_1))}\sum_{x_1,x_2\in\fthreek}\omega^{\Tr_k\big(-B^{-1} x_2^2+(B^{-1} x_2+y_1+y_3+s_k B^{-2})x_1+(y_1+y_2)x_2\big)}\\
&=3^{2k}\omega^{\Tr_k\big(B(y_1+y_3)(y_1-y_2-y_3)+s_k B^{-1}(y_3-y_2-y_1)\big)}\enspace.
\end{align}
Thus, function $f$ is bent and not weakly regular since the sign of the Walsh transform coefficients can be both plus and minus.

\begin{example}
Take the component function $\Tr_k(F(x))$ of (\ref{eq:trinom_nwr_vec}) with $k=1$, $j=2k$ and $t=(3^k-1)/2$ and find the $4$-variable 
polynomial representation for its dual. In this case, $B=1$ and combine (\ref{eq:b0notsp}) and (\ref{eq:b0sp}) with the indicator function 
of $x_0=-1$ to obtain the following polynomial of algebraic degree four 
\begin{align*}
&(x_0+1)^2(-(x_0-x_1-x_3)^2 (x_0+1)^{-1}-(x_0 x_2-x_0 x_3-x_0+x_1+x_2)^2 (x_0+1)^{-1}+x_0 x_1)\\
&\ -((x_0+1)^2-1)((x_1+x_3)(x_1-x_2-x_3)-x_1-x_2+x_3)\\
&=-(x_0-x_1-x_3)^2 (x_0+1)-(x_0 x_2-x_0 x_3-x_0+x_1+x_2)^2 (x_0+1)+x_0 x_1(x_0+1)^2\\
&\ -((x_0+1)^2-1)((x_1+x_3)(x_1-x_2-x_3)-x_1-x_2+x_3)\\
&=x_0^3+x_0^2(-x_1^2-x_1 x_2-x_1 x_3-x_2 x_3+x_1-x_2-x_3+1)\\
&\ +x_0(-x_1^2-x_2^2+x_1 x_2-x_1+x_3)+x_1^2-x_2^2-x_3^2+x_1 x_2+x_1 x_3\\
&=x_0^2(-x_1^2+x_1 x_2+x_1+x_2+1)-x_0(x_1^2+x_1-x_2+x_3-1)+x_1^2-x_2^2-x_1 x_2\enspace,
\end{align*}
where in the last identity we substitute $x_1=x_1-x_2$ and $x_2=x_2+x_3$. The corresponding relative trace form can be obtained using 
(\ref{eq:2univ}) and it consists of $12$ terms (dropping the linear term) 
\begin{align*}
\Tr_{4}\big(&-a^{13} x^{22}+a^{30} x^{20}-a^{31} x^{16}-a^{37} x^{14}+a^{14} x^{13}-a^{31} x^{11}\\
&+a^{30} x^{10}+a^{23} x^8-a^{22} x^7+a^{33} x^5-a^{35} x^4-a^{28} x^2\big)\enspace.
\end{align*}
This is an interesting example of dual bent function $f$ for which the dual $f^*$ has a different algebraic degree, which guarantees that 
$f$ and $f^*$ are EA-inequivalent. 
\end{example} 

\begin{example}
Take $k=3$ and $t=(3^k-1)/2$ and combine (\ref{eq:b0notsp}) and (\ref{eq:b0sp}) with the indicator function of $x_0=1$ to obtain the 
following polynomial of algebraic degree seven 
\begin{align*}
&\Tr_3\Big((x_0-1)^{26}\big(-(x_0-x_1-x_3)^2(x_0-1)^{-1}\\
&-(x_0^3-x_0 x_2+x_0 x_3+x_0+x_1+x_2)^2(x_0^2+1)^{-1}(x_0-1)^{-2}\\
&\ -x_0^2+x_0 x_1-x_0\big)-((x_0-1)^{26}-1)((x_1+x_2)(x_3-x_2-1)+x_3)\Big)\\
&=\Tr_3\big(-(x_0-x_1-x_3)^2(x_0-1)^{25}\\
&-(x_0^3-x_0 x_2+x_0 x_3+x_0+x_1+x_2)^2(x_0^2+1)^{25}(x_0-1)^{24}\\
&\ -x_0(x_0-x_1+1)(x_0-1)^{26}-((x_0-1)^{26}-1)((x_1+x_2)(x_3-x_2-1)+x_3)\big)\\
&=\Tr_3\big(-(x_0-x_1-x_3+1)^2x_0^{25}-(x_0^3-x_0 x_2+x_0+x_1+x_3-1)^2(x_0^2-x_0-1)^{25}x_0^{24}\\
&\ -(x_0+1)(x_0-x_1-1)x_0^{26}+(x_0^{26}-1)((x_1+x_2+x_3)x_2+x_1+x_2)\big)\enspace,
\end{align*}
where in the last identity, we substitute $x_0=x_0+1$ and $x_2=x_2+x_3$. 
\end{example} 

\section*{Acknowledgment}
The authors would like to thank Ay\c{c}a \c{C}e\c{s}melio\u{g}lu and Wilfried Meidl for making us think about vectorial bent functions in 
Theorem~\ref{th:trinom_nwr_vec}, also Tor Helleseth and Chunlei Li for useful discussions. 

\bibliographystyle{splncs04}
\bibliography{IEEEabrv,mrabbrev,all}

\begin{thebibliography}{10}
\providecommand{\url}[1]{\texttt{#1}}
\providecommand{\urlprefix}{URL }
\providecommand{\doi}[1]{https://doi.org/#1}

\bibitem{Ca96}
Carlet, C.: A construction of bent functions. In: Cohen, S., Niederreiter, H.
  (eds.) Finite Fields and Applications: Proceedings of the Third International
  Conference. pp. 47--58. London Mathematical Society Lecture Note Series. 233,
  Cambridge University Press (1996)

\bibitem{Ca99}
Carlet, C.: On cryptographic propagation criteria for {B}oolean functions.
  Information and Computation  \textbf{151}(1--2),  32--56 (May 1999)

\bibitem{Ca20}
Carlet, C.: Boolean Functions for Cryptography and Coding Theory. Cambridge
  University Press, Cambridge (2020)

\bibitem{CaDi04}
Carlet, C., Ding, C.: Highly nonlinear mappings. J. Complexity
  \textbf{20}(2--3),  205--244 (Apr/Jun 2004)

\bibitem{CaVi25}
Carlet, C., Villa, I.: On cubic-like bent {B}oolean functions. J. Algebraic
  Combin.  \textbf{61}(4), 50 (Jun 2025)

\bibitem{CeMcGMei12}
\c{C}e\c{s}melio\u{g}lu, A., McGuire, G., Meidl, W.: A construction of weakly
  and non-weakly regular bent functions. J. Combin. Theory Ser. A
  \textbf{119}(2),  420--429 (Feb 2012)

\bibitem{CeMeiPo13_2}
\c{C}e\c{s}melio\u{g}lu, A., Meidl, W., Pott, A.: Generalized
  {M}aiorana-{M}c{F}arland class and normality of $p$-ary bent functions.
  Finite Fields Appl.  \textbf{24},  105--117 (Nov 2013)

\bibitem{CeMeiPo13_1}
\c{C}e\c{s}melio\u{g}lu, A., Meidl, W., Pott, A.: On the dual of (non)-weakly
  regular bent functions and self-dual bent functions. Advances in Mathematics
  of Communications  \textbf{7}(4),  425--440 (Nov 2013)

\bibitem{CeMeiPo16}
\c{C}e\c{s}melio\u{g}lu, A., Meidl, W., Pott, A.: There are infinitely many
  bent functions for which the dual is not bent. {IEEE} Trans. Inf. Theory
  \textbf{62}(9),  5204--5208 (Sep 2016)

\bibitem{CeMeiPo20}
\c{C}e\c{s}melio\u{g}lu, A., Meidl, W., Pott, A.: Vectorial bent functions in
  odd characteristic and their components. Cryptography and Communications
  \textbf{12}(5),  899--912 (Sep 2020)

\bibitem{CoMa97}
Coulter, R.S., Matthews, R.W.: Planar functions and planes of {L}enz-{B}arlotti
  class {II}. Des. Codes Cryptogr.  \textbf{10}(2),  167--184 (Feb 1997)

\bibitem{Di74}
Dillon, J.F.: Elementary {H}adamard Difference Sets. Ph.D. thesis, University
  of Maryland (1974)

\bibitem{Do98}
Dobbertin, H.: One-to-one highly nonlinear power functions on
  $\mathrm{GF}(2^n)$. Appl. Algebra Engrg. Comm. Comput.  \textbf{9}(2),
  139--152 (Jul 1998)

\bibitem{HeKh06_1}
Helleseth, T., Kholosha, A.: Monomial and quadratic bent functions over the
  finite fields of odd characteristic. {IEEE} Trans. Inf. Theory
  \textbf{52}(5),  2018--2032 (May 2006)

\bibitem{HeKh10_4}
Helleseth, T., Kholosha, A.: New binomial bent functions over the finite fields
  of odd characteristic. {IEEE} Trans. Inf. Theory  \textbf{56}(9),  4646--4652
  (Sep 2010)

\bibitem{Ho04_1}
Hou, X.D.: $p$-{A}ry and $q$-ary versions of certain results about bent
  functions and resilient functions. Finite Fields Appl.  \textbf{10}(4),
  566--582 (Oct 2004)

\bibitem{HuZhSh17}
Hu, H., Zhang, Q., Shao, S.: On the dual of the {C}oulter--{M}atthews bent
  functions. {IEEE} Trans. Inf. Theory  \textbf{63}(4),  2454--2463 (Apr 2017)

\bibitem{KuScWe85}
Kumar, P.V., Scholtz, R.A., Welch, L.R.: Generalized bent functions and their
  properties. J. Combin. Theory Ser. A  \textbf{40}(1),  90--107 (Sep 1985)

\bibitem{Le06}
Leander, N.G.: Monomial bent functions. {IEEE} Trans. Inf. Theory
  \textbf{52}(2),  738--743 (Feb 2006)

\bibitem{LiNi97}
Lidl, R., Niederreiter, H.: Finite Fields, Encyclopedia of Mathematics and its
  Applications, vol.~20. Cambridge University Press, Cambridge (1997)

\bibitem{LiLU14}
Lison\v{e}k, P., Lu, H.Y.: Bent functions on partial spreads. Des. Codes
  Cryptogr.  \textbf{73}(1),  209--216 (Oct 2014)

\bibitem{Mei16}
Meidl, W.: Generalized {R}othaus construction and non-weakly regular bent
  functions. J. Combin. Theory Ser. A  \textbf{141},  78--89 (Jul 2016)

\bibitem{Mei22}
Meidl, W.: A survey on $p$-ary and generalized bent functions. Cryptography and
  Communications  \textbf{14}(4),  737--782 (Jul 2022)

\bibitem{MeOzSi18}
Mesnager, S., \"{O}zbudak, F., S{\i}nak, A.: On the $p$-ary (cubic) bent and
  plateaued (vectorial) functions. Des. Codes Cryptogr.  \textbf{86}(8),
  1865--1892 (Aug 2018)

\bibitem{OzPe20}
\"{O}zbudak, F., Pelen, R.M.: Duals of non-weakly regular bent functions are
  not weakly regular and generalization to plateaued functions. Finite Fields
  Appl.  \textbf{64}, 101668 (Jun 2020)

\bibitem{PoTaFeLi11}
Pott, A., Tan, Y., Feng, T., Ling, S.: Association schemes arising from bent
  functions. Des. Codes Cryptogr.  \textbf{59}(1--3),  319--331 (Apr 2011)

\bibitem{QiTaHu18}
Qi, Y., Tang, C., Huang, D.: Explicit characterization of two classes of
  regular bent functions. Applicable Algebra in Engineering, Communication and
  Computing  \textbf{29}(6),  529--544 (Dec 2018)

\bibitem{TaPoFe10}
Tan, Y., Pott, A., Feng, T.: Strongly regular graphs associated with ternary
  bent functions. J. Combin. Theory Ser. A  \textbf{117}(6),  668--682 (Aug
  2010)

\bibitem{TaYaZh10}
Tan, Y., Yang, J., Zhang, X.: A recursive construction of $p$-ary bent
  functions which are not weakly regular. In: Proceedings of IEEE International
  Conference on Information Theory and Information Security. pp. 156--159. IEEE
  (Dec 2010)

\bibitem{TaQiHu19}
Tang, C., Qi, Y., Huang, D.: Regular $p$-ary bent functions with five terms and
  {K}loosterman sums. Cryptography and Communications  \textbf{11}(5),
  1133--1144 (Sep 2019)

\bibitem{TaXuQiZh21}
Tang, C., Xu, M., Qi, Y., Zhou, M.: A new class of $p$-ary regular bent
  functions. Advances in Mathematics of Communications  \textbf{15}(1),  55--64
  (Feb 2021)

\end{thebibliography}

\end{document}